\documentclass[journal]{IEEEtran}
\usepackage{color}
\pagestyle{empty}
\usepackage{mathrsfs}
\usepackage{amsfonts}
\usepackage{amsthm}
\theoremstyle{remark}

\newtheorem{corollary}{Corollary}
\newtheorem{theorem}{Theorem}
\newtheorem{lemma}{Lemma}
\newtheorem{remark}{Remark}

\usepackage{algorithm}
\usepackage{cite}
\usepackage{fancyhdr}
\usepackage{bbm}
\usepackage{datetime}

\linespread{1}

\ifCLASSINFOpdf
   \usepackage[pdftex]{graphicx}
\else
   \usepackage[dvips]{graphicx}
   \graphicspath{{../eps/}}
   \DeclareGraphicsExtensions{.eps}
\fi

\usepackage[cmex10]{amsmath}
%

%
\usepackage{algorithmic}
\begin{document}

\title{\vspace{0.4cm}Energy Efficient Distributed Coding for Data Collection in a Noisy Sparse Network}
\author{
\IEEEauthorblockN{Yaoqing Yang, Soummya Kar and Pulkit Grover\vspace{-0.4in}}
%
\thanks{This work was partially supported by the National Science Foundation under Grant CCF-1513936, NSF ECCS-1343324, NSF CCF-1350314, and by NSF grant ECCS-1306128.

Y. Yang, S. Kar and P. Grover are with the Department of Electrical and Computer Engineering, Carnegie Mellon University, Pittsburgh, PA, 15213, USA. Email: \{yyaoqing,soummyak,pgrover\}@andrew.cmu.edu}
}
\maketitle
\rfoot{}
\renewcommand{\headrulewidth}{0pt}


\begin{abstract}
We consider the problem of data collection in a two-layer network consisting of (1) links between $N$ distributed agents and a remote sink node; (2) a sparse network formed by these distributed agents. We study the effect of inter-agent communications on the overall energy consumption. Despite the sparse connections between agents, we provide an in-network coding scheme that reduces the overall energy consumption by a factor of $\Theta(\log N)$ compared to a naive scheme which neglects inter-agent communications. By providing lower bounds on both the energy consumption and the sparseness (number of links) of the network, we show that are energy-optimal except for a factor of $\Theta(\log\log N)$. The proposed scheme extends a previous work of Gallager~\cite{Gal_TIT_88} on noisy broadcasting from a complete graph to a sparse graph, while bringing in new techniques from error control coding and noisy circuits.
\end{abstract}

\textbf{\textit{Index terms}}: graph codes, sparse codes, noisy networks, distributed encoding, scaling bounds.
\vspace{-3mm}
\section{Introduction}
Consider a problem of collecting messages from $N$ distributed agents in a two-layer network. Each agent has one independent random bit $x_i\sim \text{Bernoulli}(\frac{1}{2})$, called the \emph{self-information bit}. The objective is to collect all self-information bits in a remote sink node with high accuracy. Apart from a noisy channel directly connected to the sink node, each agent can also construct a few noisy channels to other agents. We assume that, the inter-agent network has an advantage that an agent can transmit bits simultaneously to all its neighbors using a broadcast. However, constructing connections between distributed agents is difficult, meaning that the inter-agent network is required to be sparse.

Since agents are connected directly to the sink, there exists a simple scheme~\cite{Gal_TIT_88} which achieves polynomially decaying error probability with $N$: for all $n$ such that $1\le n\le N$, the $n$-th agent transmits $x_n$ to the sink for $\Theta(c\log N)$ times, where $c>1$, to ensure that $\Pr(\hat{x}_n\neq x_n)=\mathcal{O}\left(\frac{1}{N^c}\right)$. Then, using the union bound, we have that $\Pr(\hat{\mathbf{x}}\neq\mathbf{x})=\mathcal{O}\left(\frac{1}{N^{c-1}}\right)$. However, this naive scheme can only provide a solution in which the number of transmissions scales as $\Theta(N\log N)$. In this paper, we show that, by carrying out $\Theta(N\log\log N)$ inter-agent broadcasts, we can reduce the number of transmissions between distributed agents and the remote sensor from $\Theta(N\log N)$ to $\Theta(N)$, and hence dramatically reduce the energy consumption. Moreover, we show that, for the inter-agent broadcasting scheme to work, only $\Theta(N\log N)$ inter-agent connections are required.

A related problem is function computation in sensor networks~\cite{Gal_TIT_88,Gir_JSAC_05,Kara_TIT_11,Kow_TIT_12,Ying_TIT_07,Kamath_TIT_14}, especially the identity function computation problem \cite{Gir_JSAC_05,Kara_TIT_11,Gal_TIT_88}. In~\cite{Gal_TIT_88}, Gallager designed a coding scheme with $\mathcal{O}(N\log\log N)$ broadcasts for identify function computation in a complete graph. Here, we address the same problem in a much sparser graph and obtain the same scaling bound using a conceptually different distributed encoding scheme that we call \emph{graph code}. We also show that, the required inter-agent graph is the sparsest graph except for a $\Theta(\log\log N)$ factor, in that the number of links in the sparsest graph for achieving the $\mathcal{O}(N\log\log N)$ number of communications (energy consumption) has to be $\Omega\left(\frac{N\log N}{\log \log N}\right)$, if the error probability $\Pr(\hat{\mathbf{x}}\neq\mathbf{x})$ is required to be $o(1)$. In~\cite{Gir_JSAC_05}, Giridhar and Kumar studied the rate of computing \emph{type-sensitive} and \emph{type-threshold} functions in a random-planar network. In~\cite{Kara_TIT_11}, Karamchandani, Appuswamy and Franceschetti studied function computing in a grid network. Readers are referred to an extended version \cite{Yang_Arx_15_2} for a thorough literature review.

From the perspective of coding theory, the proposed graph code is closely related to erasure codes that have low-density generator matrices (LDGM). In fact, the graph code in this paper is equivalent to an LDGM erasure code with noisy encoding circuitry~\cite{Yang_All_14}, where the encoding noise is introduced by distributed encoding in the noisy inter-agent communication graph. Based on this observation, we show (in Corollary~\ref{coding_upb}) that our result directly leads to a known result in LDGM codes. Similar results have been reported by Luby~\cite{Lub_FOCS_02} for fountain codes, by Dimakis, Prabhakaran and Ramchandran \cite{Dim_TON_06} and by Mazumdar, Chandar and Wornell \cite{Maz_JSAC_14} for distributed storage, both with noise-free encoding. In the extended version~\cite{Yang_Arx_15_2}, we show that this LDGM code achieves sparseness (number of $1$'s in the generator matrix) that is within a $\Theta(\log\log N)$ multiple of an information-theoretic lower bound. Finally, We briefly summarize the main technical contributions of this paper:
\begin{itemize}
  \item we extend the classic distributed data collection problem (identity function computation) to sparse graphs, and obtain the same scaling bounds on energy consumption;
  \item we provide both upper and lower bounds on the sparseness (number of edges) of the communication graph for constrained energy consumption;
  \item we extend classic results on LDGM codes to in-network computing with encoding noise.
\end{itemize}

\section{System Model and Problem Formulations}\label{modeling}
Denote by $\mathcal{V}=\{v_1,\dots,v_N\}$ the set of distributed agents. Assume that in the first layer of the network, each agent has a link to the sink node $v_0$, and this link is a BEC (binary erasure channel) with erasure probability $\epsilon$. Each transmission from a distributed agent to the sink consumes energy $E_1$. We denote by $\mathcal{G}=(\mathcal{V},\mathcal{E})$ the second layer of the network, \emph{i.e.}, a directed inter-agent graph. We assume that each directed link in $\mathcal{G}$ is also a BEC with erasure probability $\epsilon$. We denote by $\mathcal{N}_v^-$ and $\mathcal{N}_v^+$ the one-hop in-neighborhood and out-neighborhood of $v$. Each broadcast from a node $v$ to all of its out-neighbors in $\mathcal{N}_v^+$ consumes energy\footnote{Due to possibly large distance to the sink node, it is likely that the energy consumption $E_2<E_1$. But we do not make any specific assumption on relationship between $E_2$ and $E_1$.} $E_2$. We allow $\mathcal{N}_v^-$ and $\mathcal{N}_v^+$ to contain $v$ itself (self-loops), because a node can broadcast information to itself. Denote by $d_n$ the out-degree of the $v_n$. Then, we have that $|\mathcal{E}|=\mathop\sum\limits_{n=1}^Nd_n$.
\vspace{-3mm}
\subsection{Data Gathering with Transmitting and Broadcasting}\label{Graph_Model}
A \emph{computation scheme} $\mathscr{S}=\{f_t\}_{t=1}^{T}$ is a sequence of Boolean functions, such that at each time slot $t$, a single node $v(t)$ computes the function $f_{t}$ (whose arguments are to be made precise below), and either broadcasts the computed output bit to $\mathcal{N}_v^+$, or transmits to $v_0$. We assume that the scheme terminates in finite time, \emph{i.e.}, $T<\infty$. The arguments of $f_t$ may consist of all the information that the broadcasting node $v(t)$ has up to time $t$, including its self-information bit $x_{v(t)}$, randomly generated bits and information obtained from its in-neighborhood. A scheme has to be feasible, meaning that all arguments of $f_t$ should be available at $v(t)$ before time $t$. We only consider oblivious transmission schemes, \emph{i.e.}, the three-tuple $(T,\{f_t\}_{t=1}^{T},\{v(t)\}_{t=1}^{T})$ and the decisions to broadcast or to transmit are predetermined. Denote by $\mathcal{F}$ the set of all feasible oblivious schemes. For a feasible scheme $\mathscr{S}\in\mathcal{F}$, denote by $t_{n,1}$ the number of transmissions from $v_n$ to the sink, and by $t_{n,2}$ the number of broadcasts from $v_n$ to $\mathcal{N}_v^+$. Then, the overall energy consumption is
\begin{equation}\label{Total_E}
  E=\mathop\sum_{n=1}^N E_1t_{n,1}+E_2t_{n,2}.
\end{equation}
Conditioned on the graph $\mathcal{G}$, The error probability is defined as $P_e^{\mathcal{G}}=\Pr(\hat{\mathbf{x}}\neq\mathbf{x})$, where $\hat{\mathbf{x}}$ denotes the final estimate of $\mathbf{x}$ at the sink $v_0$. It is required that $P_e^{\mathcal{G}}\le p_{\text{tar}}$ where $p_{\text{tar}}$ is the target error probability and might be zero. We also impose a sparse constraint on the problem, meaning the number of edges in the second layer of the network is smaller than $D$. The problem to be studied is therefore
\begin{equation}\label{op_problem}
\begin{split}
\text{\bf{Problem 1:   }}{{\min }_{\mathcal{G},\mathcal{S}\in \mathcal{F}}}\ \ E,\text{s}\text{.t}\text{.}\ \left\{ \begin{matrix}
   P_{e}^{\mathcal{G}}\le {{p}_{\text{tar}}},  \\
   |\mathcal{E}|<D.  \\
\end{matrix} \right.
\end{split}
\end{equation}
A related problem formulation is to minimize the number of edges (obtaining the sparsest graph) while making the energy consumption constrained:
\begin{equation}\label{op_problem_2}
\begin{split}
\text{\bf{Problem 2:   }}{{\min }_{\mathcal{G},\mathcal{S}\in \mathcal{F}}}\ \ |\mathcal{E}|,\text{s}\text{.t}\text{.}\ \left\{ \begin{matrix}
   P_{e}^{\mathcal{G}}\le {{p}_{\text{tar}}},  \\
   E<E_{M}.  \\
\end{matrix} \right.
\end{split}
\end{equation}
\subsection{Lower Bounds on Energy Consumption and Sparseness}
\begin{theorem}\label{best_degree}(Lower Bounds)
For Problem 1, suppose $\frac{{{N}^{2}}}{4\delta D}>{{e}^{1.5}}$, where $\delta=\ln \frac{1}{1-p_\text{tar}}=\Theta(p_\text{tar})$. Then, the solution of Problem 1 satisfies
\begin{equation}\label{p1lb}
\begin{split}
  &E\ge\max\left( NE_1,\frac{1}{\ln(1/\epsilon)}\min\left(\frac{N{{E}_{1}}}{2}\ln \frac{N}{2\delta },\frac{{{N}^{2}}{{E}_{2}}}{4D}\ln \frac{N}{2\delta }\right)\right)\\
  =&\Omega \left(\max\left(NE_1 ,\min \left( N{{E}_{1}}\ln \frac{N}{p_\text{tar}},\frac{{{N}^{2}}{{E}_{2}}}{D}\ln \frac{N}{p_\text{tar}} \right) \right)\right).
\end{split}
\end{equation}
For Problem 2, suppose $\frac{E_2N^2}{4\delta E_M}>e^{1.5}$ and $E_M< \frac{N{{E}_{1}}}{2\ln \left( 1/\epsilon  \right)}\ln \frac{N}{2\delta }$. Then, solution\footnote{Note that when the energy constraint $E_M\to 0$, the RHS of \eqref{p2lb} goes to infinity. This does not mean the lower bound is wrong, but means that Problem 2 does not have a feasible solution, and hence the minimized value of Problem 2 is infinity. See Remark~\ref{remark2} in Appendix~\ref{PofThm2} for details.} of Problem 2 satisfies
\begin{equation}\label{p2lb}
   |\mathcal{E}|\ge \frac{{{N}^{2}}{{E}_{2}}}{4\ln \left( 1/\varepsilon  \right){{E}_{M}}}\ln \frac{N}{2\delta }=\Omega\left({\frac{{{N}^{2}}{{E}_{2}}}{{{E}_{M}}}\ln \frac{N}{2p_\text{tar} }}\right).
\end{equation}
\end{theorem}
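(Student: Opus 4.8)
The plan is to convert the probabilistic reliability requirement into a deterministic optimization over the per-node transmission counts, using a genie-aided cut-set argument applied to each source bit and then combining the single-bit events through independence to manufacture the decisive $\log N$ factor. First I would isolate, for each node $v_n$, the event $\mathcal{A}_n$ that every channel use emanating from $v_n$ is erased: all $t_{n,1}$ direct transmissions to the sink, and all $d_n$ edges in each of the $t_{n,2}$ broadcasts. Since $x_n$ is initially known only to $v_n$ and can influence the rest of the network solely through $v_n$'s outgoing edges, on $\mathcal{A}_n$ the sink's entire view is independent of $x_n$; as the channel uses are mutually independent $\text{BEC}(\epsilon)$ erasures, $\Pr(\mathcal{A}_n)=\epsilon^{t_{n,1}+d_nt_{n,2}}$. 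Writing $b_n:=t_{n,1}+d_nt_{n,2}$ and conditioning on the channel erasures (which determine the random index set $S=\{n:\mathcal{A}_n\ \text{occurs}\}$), the sink can do no better than guess each cut-off bit, so $\Pr(\hat{\mathbf{x}}=\mathbf{x})\le \mathbb{E}[2^{-|S|}]$.

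The decisive step is that the events $\{\mathcal{A}_n\}_n$ depend on disjoint sets of channel uses (edges out of distinct nodes), hence are independent; this factorizes the expectation, giving
\begin{equation*}
1-P_e^{\mathcal{G}}\le \prod_{n=1}^N\bigl(1-\tfrac12\epsilon^{b_n}\bigr).
\end{equation*}
Imposing $P_e^{\mathcal{G}}\le p_{\text{tar}}$, taking logarithms, and using $\ln(1-x)\le -x$ with $1-p_{\text{tar}}=e^{-\delta}$ collapses the reliability requirement to the single constraint $\sum_{n=1}^N \epsilon^{b_n}\le 2\delta$. It is exactly this product over bits, rather than a single worst bit, that upgrades the $\Theta(\ln\frac{1}{p_{\text{tar}}})$ of a one-bit bound into the $\Theta(\ln N)$ in the theorem: in the balanced case $b_n\equiv b$ it forces $b\ge \log_{1/\epsilon}(N/2\delta)=\frac{1}{\ln(1/\epsilon)}\ln\frac{N}{2\delta}$.

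It then remains to minimize $E=\sum_n(E_1t_{n,1}+E_2t_{n,2})$ subject to $\sum_n\epsilon^{b_n}\le 2\delta$ and the sparsity budget $\sum_n d_n\le D$. Per bit, $E_1t_{n,1}+E_2t_{n,2}\ge \min(E_1,E_2/d_n)\,b_n$, so each bit is served at marginal cost $E_1$ (pure transmission) or $E_2/d_n$ (pure broadcast), whichever is cheaper; symmetrizing across nodes (optimal by convexity of $x\mapsto\epsilon^x$ and $d\mapsto 1/d$, with the degree budget pushing $d_n\approx D/N$) yields $E=\Omega(\frac{1}{\ln(1/\epsilon)}\ln\frac{N}{2\delta}\cdot\min(NE_1, N^2E_2/D))$, the second argument of the outer $\max$ in \eqref{p1lb}. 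The floor $E\ge NE_1$ comes from a separate cut-set/Fano bound at the sink: all information about $\mathbf{x}$ reaches $v_0$ only through the direct links, so $I(\mathbf{x};\hat{\mathbf{x}})\le (1-\epsilon)\sum_n t_{n,1}$, and $\Pr(\hat{\mathbf{x}}\neq\mathbf{x})=o(1)$ forces $\sum_n t_{n,1}=\Omega(N)$. Problem 2 follows from the same reduction, now fixing $\sum_n(E_1t_{n,1}+E_2t_{n,2})\le E_M$ and minimizing $\sum_n d_n$: the hypothesis $E_M<\frac{NE_1}{2\ln(1/\epsilon)}\ln\frac{N}{2\delta}$ makes direct transmission alone infeasible, so reliability must be bought through broadcasts, and solving $NE_2 b/d\le E_M$ with $b=\log_{1/\epsilon}(N/2\delta)$ gives $|\mathcal{E}|=Nd=\Omega(\frac{N^2E_2}{E_M\ln(1/\epsilon)}\ln\frac{N}{2\delta})$, matching \eqref{p2lb}.

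I expect the main obstacle to be this last optimization rather than the probabilistic reduction: because $b_n$ couples $d_n$ and $t_{n,2}$ multiplicatively, the program is genuinely non-convex, so ruling out degree-concentrated or reliability-skewed solutions rigorously (instead of by the symmetry heuristic above) is what demands careful constant bookkeeping — presumably a Markov/counting argument isolating a constant fraction of ``cheap'' bits — and this is the likely source of the factors $\tfrac12$ and $\tfrac14$ and of the regularity hypotheses $\frac{N^2}{4\delta D}>e^{1.5}$ and $\frac{E_2N^2}{4\delta E_M}>e^{1.5}$, which keep the relevant logarithms positive and determine which corner of the per-bit cost is active.
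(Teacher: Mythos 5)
Your proposal follows essentially the same route as the paper: the cut-set event that all of $v_n$'s outgoing channel uses are erased, independence of these events across nodes to get a product bound on $1-P_e^{\mathcal{G}}$, relaxation to the single constraint $\sum_n \epsilon^{t_{n,1}+d_nt_{n,2}}\lesssim\delta$, the per-node dichotomy between pure transmission and pure broadcast (your inequality $E_1t_{n,1}+E_2t_{n,2}\ge\min(E_1,E_2/d_n)(t_{n,1}+d_nt_{n,2})$ is exactly the paper's observation that one should set $t_{n,2}=0$ when $d_n\le E_2/E_1$ and $t_{n,1}=0$ otherwise), and the same reduction for Problem~2 via the infeasibility of transmission-only schemes under the stated bound on $E_M$. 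Two remarks. First, your reduction is in one respect cleaner than the paper's: the paper asserts $P_e^{\mathcal{G}}\ge 1-\prod_n(1-p_n)$ on the grounds that a fully cut node "cannot be recovered,'' whereas your $\Pr(\hat{\mathbf{x}}=\mathbf{x})\le\mathbb{E}[2^{-|S|}]$ correctly accounts for the sink guessing the cut bits, at the harmless cost of a factor $2$ in $\delta$. Second, the step you explicitly defer — proving that the symmetric allocation $d_n\approx D/N$ is optimal rather than appealing to convexity heuristics — is precisely the paper's Lemma~\ref{Theo2_lemma}: for fixed degrees it solves the inner problem by KKT to get $x_n=\frac{1}{\ln(1/\epsilon)}\frac{1}{a_n}\bigl(\ln a_n+\ln\frac{1}{\delta}\sum_l a_l^{-1}\bigr)$, lower-bounds $\sum_l a_l^{-1}$ by $N^2/A$ via Cauchy--Schwarz, and then applies Jensen to $a\mapsto\frac{1}{a}(B+\ln a)$, whose convexity on $[1,\infty)$ is exactly what the hypothesis $\frac{N^2}{4\delta D}>e^{1.5}$ guarantees — confirming your guess about where that condition and the constants $\frac12,\frac14$ come from (the latter arising from the case split on whether at least $N/2$ nodes broadcast). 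So the proposal is correct in structure and identifies the right technical crux, but the optimization lemma itself is asserted rather than proved.
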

\begin{proof}
Due to limited space, we only include a brief introduction on the idea of the proof. See Appendix~\ref{PofThm2} for a complete proof. First, for the $n$-th node, the probability ${{p}_{\text{n}}}$ that all ${{t}_{n,1}}$ transmissions and ${{t}_{n,2}}$ broadcasts to its $d_n$ neighbors are erased is ${{p}_{n}}={{\epsilon }^{{{t}_{n,1}}+{{d}_{n}}{{t}_{n,2}}}}$. If this event happens for $v_n$, all information about $x_n$ is erased, and hence all self-information bits cannot be recovered. Thus,
\begin{equation}
  P_e^{\mathcal{G}}=\Pr(\hat{\mathbf{x}}\neq\mathbf{x})\ge1-\mathop\prod\limits_{n=1}^N (1-p_\text{n})=1-\mathop\prod\limits_{n=1}^N (1-{{\epsilon }^{{{t}_{n,1}}+{{d}_{n}}{{t}_{n,2}}}}).
\end{equation}
The above inequality can be relaxed by
\begin{equation}\label{cons_1_1}
  \sum\limits_{n=1}^{N}{{{\epsilon }^{{{t}_{n,1}}+{{d}_{n}}{{t}_{n,2}}}}}<\ln \frac{1}{1-P_{e}^{\mathcal{G}}}<\ln \frac{1}{1-p_\text{tar}},
\end{equation}
where $p_\text{tar}$ is the target error probability. The lower bounds of Problem 1 and Problem 2 are obtained by relaxing the constraint $P_{e}^{\mathcal{G}}<p_\text{tar}$ by \eqref{cons_1_1}. In what follows, we provide some intuition for Problem 1 as an example. For Problem 1, we notice that, in order to make the overall energy $E$ in \eqref{Total_E} smaller, we should either make $t_{n,1}$ smaller, or make $t_{n,2}$ smaller, while maintaining $t_{n,1}+d_nt_{n,2}$ large enough to make $\eqref{cons_1_1}$ hold. Actually, we can make the following observations:
\begin{itemize}
  \item if ${{d}_{n}}\le \frac{{{E}_{2}}}{{{E}_{1}}}$, we should set ${{t}_{n,2}}=0$, \emph{i.e.} we should forbid $v_n$ from broadcasting. Otherwise, we should set ${{t}_{n,1}}=0$;
  \item if ${{d}_{n}}\le \frac{{{E}_{2}}}{{{E}_{1}}}$, since ${{t}_{n,2}}=0$, we can always make the energy consumption $E$ smaller by setting ${{d}_{n}}=0$, \emph{i.e.}, we construct no out-edges from $v_n$ in the graph $\mathcal{G}$.
\end{itemize}
Using these observations, we can decompose the original optimization into two subproblems respectively regarding $d_n\ge E_2/E_1$ and $d_n< E_2/E_1$. We can complete the proof using standard optimization techniques and basic inequalities.
\end{proof}
\begin{remark}
Note that the lower bounds hold for individual graph instances with arbitrary graph topologies. Although the two lower bounds are not tight for all cases, we especially care about the case when the sparseness constraint $D$ satisfies $D=\mathcal{O}(N\log N)$ and the energy constraint $E_M$ satisfies $E_M=o(N\log N)$. In this case, we will provide an upper bound that differs from the lower bound by a multiple of $\Theta(\log\log N)$. In Section~\ref{Best_Degree}, we provide a detailed comparison between the upper and the lower bounds.
\end{remark}

\section{Main Technique: Graph Code}\label{main_technique}
In this section, we provide an distributed coding scheme in accordance with the goal of Problem 1 and Problem 2. The code considered in this paper, which we call $\mathcal{GC}$-3 graph code\footnote{We name this code $\mathcal{GC}$-3 because we also designed $\mathcal{GC}$-1 and $\mathcal{GC}$-2 graph codes. Readers are referred to an extended version of this paper \cite{Yang_Arx_15_2} for more details. Problem in \cite{Yang_Arx_15_2} are motivated from the perspective of communication complexity, which is fundamentally different from this paper.}, is a systematic binary code that has a generater matrix $\mathbf{G}=[\mathbf{I},\mathbf{A}]$ with $(\mathbf{A})_{N\times N}$ being the graph adjacency matrix of $\mathcal{G}$, \emph{i.e.}, $A_{i,j}=1$ if there is a directed edge from $v_i$ to $v_j$. The encoding of the $\mathcal{GC}$-3 graph code can be written as
\begin{equation}\label{3g_code}
  \mathbf{r}^\top=\mathbf{x}^\top \cdot\left[\mathbf{I},\mathbf{A}\right],
\end{equation}
\textcolor{black}{where $\mathbf{x}^\top=[x_1,x_2,\dots,x_N]$ denotes the self-information bits and $\mathbf{r}^\top$ denotes the encoding output with length $2N$.} \textcolor{black}{This means that the code bit calculated by a node $v$ is either its self-information bit $x_v$ or the parity of the self-information bits in its in-neighborhood $\mathcal{N}_v^-$. Therefore, $\mathcal{GC}$-3 codes are easy to encode using inter-agent broadcasts and admit distributed implementations.} In what follows, we define the in-network computing scheme associated with the $\mathcal{GC}$-3 code.
\subsection{In-network Computing Scheme}\label{Algorithm}
The in-network computing scheme has two steps. During the first step, each node take turns to broadcast its self-information bit to $\mathcal{N}^+(v)$ for $t$ times, where
\begin{equation}\label{Broadcasting_Times}
  t=\frac{1}{\log(1/\epsilon)}{\log\left(\frac{c\log N}{p_{\text{ch}}}\right)},
\end{equation}
where $c\in(0,\infty)$ and $p_{\text{ch}}\in(0,1/2)$ are two predetermined constants. Then, each node estimates all self-information bits from all its in-neighbors in $\mathcal{N}_v^-$. The probability that a certain bit is erased for $t$ times when transmitted from a node $v$ to one of its out-neighbors is
\begin{figure}
  \centering
 \includegraphics[scale=0.3]{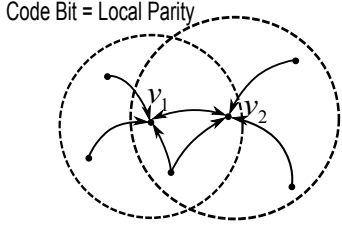}\\
  \caption{\emph{Each code bit is the parity of all one-hop in-neighbors of a specific node. Some edges in the directed graph might be bi-directional.\vspace{-5mm}}}\label{L_parity}
\end{figure}
\begin{equation}\label{one_bit_error_equation}
  P_e=\epsilon^t = \frac{p_{\text{ch}}}{c\log N}.
\end{equation}

If all information bits from its in-neighborhood $\mathcal{N}^-(v_n)$ are sent successfully, $v_n$ computes the local parity
\begin{equation}\label{Local_Parity}
  y_n=\mathop\sum \limits_{v_m\in \mathcal{N}^-(v_n)}x_m=\mathbf{x}^\top\mathbf{a}_n,
\end{equation}
where $\mathbf{a}_n$ is the $n$-th column of the adjacency matrix $\mathbf{A}$, and the summation is in the sense of modulo-2. If any bit $x_m$ is not sent to $v_n$ successfully, \emph{i.e.}, erased for $t$ times, the local parity cannot be computed. In this case, $y_n$ is assumed to take the value `$e$'. We denote the vector of all local parity bits by $\mathbf{y}=[y_1,y_2,...,y_N]^\top$. If all nodes could successfully receive all information from their in-neighborhood, we would have
\begin{equation}\label{Local_Parity_Encoding}
  \mathbf{y}^\top=\mathbf{x}^\top \mathbf{A},
\end{equation}
where $A$ is the adjacency matrix of the graph $\mathcal{G}$.

During the second step, each node $v_n$ transmits $x_n$ and the local parity $y_n$ to the sink exactly once. If a local parity $y_n$ has value `$e$', $v_n$ sends the value `$e$'. Denote the received (possibly erased) version of the self-information bits at the sink by $\tilde{\mathbf{x}}=[\tilde{x}_1,\tilde{x}_2,...,\tilde{x}_N]^\top$, and the received (possibly erased) version of local parities by $\tilde{\mathbf{y}}=[\tilde{y}_1,...,\tilde{y}_N]$. Notice that, there might be some bits in $\mathbf{y}$ changed into value `$e$' during the second step. We denote all information gathered at the sink by $\mathbf{r}=[\tilde{\mathbf{x}}^\top,\tilde{\mathbf{y}}^\top]$. If all the connections between the distributed agents and from the distributed agents to the sink were perfect, the received information $\mathbf{r}$ at the sink could be written as~\eqref{3g_code}. However, the received version is possibly with erasures, so the sink carries out the Gaussian elimination algorithm to recover all information bits, using all non-erased information. If there are too many erased bits, leading to more than one possible decoded values $\hat{\mathbf{x}}^\top$, the sink claims an error.

In all, the energy consumption is
\begin{equation}\label{Low_Diam_Up_Bd}
\begin{split}
  E=&2N\cdot E_1+N\cdot t\cdot E_2
  =2NE_1+\frac{N\log(\frac{c\log N}{p_{\text{ch}}})}{\log(1/\epsilon)}E_2\\
  =&\Theta\left(\max\left(NE_1,NE_2\log\log N\right)\right),
\end{split}
\end{equation}
where $t$ is defined in~\eqref{Broadcasting_Times}, and the constant $2$ in $2N\cdot E_1$ is introduced in the second step, when both the self-information bit and the local parity are transmitted to the sink.

\section{Analysis of the Error Probability}\label{Upper_Bound}
First, we define a random graph ensemble based on the Erd$\ddot{o}$s-R$\acute{e}$nyi graphs~\cite{Bol_Spr_98}. In this graph ensemble, each node has a directed link to another node with probability $p=\frac{c\log N}{N}$, where $c$ is the same constant in~\eqref{Broadcasting_Times}. All connections are independent of each other. We sample a random graph from this graph ensemble and carry out the in-network broadcasting scheme provided in Section~\ref{Algorithm}. Then, the error probability $P_e^{\mathcal{G}}(\mathbf{x})$ is itself a random variable, because of the randomness in the graph sampling stage and the randomness of the input. We define $P_e^{(N)}(\mathbf{x})$ as the expected error probability $P_e^{(N)}(\mathbf{x})=\mathbb{E}_\mathcal{G} [P_e^{\mathcal{G}}(\mathbf{x})]$ over the random graph ensemble.

\begin{theorem}\label{Theorem1}\label{ER_main_thm} (Upper Bound on the Ensemble Error Probability)
Suppose $\eta>0$ is a constant, $p_{\text{ch}}\in (0,\frac{1}{2})$ is a constant, $\epsilon$ is the channel erasure probability and $\varepsilon_0=(\frac{2}{1-1/e}+1)p_{\text{ch}}+\epsilon$. Assume $c\log N>1$. Define
\begin{equation}\label{b}
  b_{\eta} = \frac{1}{2}(1 - {\varepsilon _0})(1 - \frac{{1 - {e^{ - 2c\eta }}}}{2}),
\end{equation}
and assume
\begin{equation}\label{N_big_enough}
  \epsilon<b_\eta.
\end{equation}
Then, for the transmission scheme in Section~\ref{Algorithm}, we have
\begin{equation}\label{error_exponent}
P_e^{(N)}(\mathbf{x}) \le {(1 - b_{\eta})^N}{\rm{ + }}\eta e \epsilon \frac{{{N^{2 - c(1 - {\varepsilon _0})(1 - c\eta )}}}}{{\log N}},\forall \mathbf{x}.
\end{equation}
That is to say, if $2 < c(1 - {\varepsilon _0})(1 - c\eta )$, the error probability eventually decreases polynomially with $N$. The rate of decrease can be maximized over all $\eta$ that satisfies~\eqref{N_big_enough}.
\end{theorem}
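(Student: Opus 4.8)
The plan is to bound the ensemble error probability by a weight-enumerator union bound for the systematic linear code $\mathbf{G}=[\mathbf{I},\mathbf{A}]$ over the binary erasure channel, averaged over the random draw of $\mathbf{A}$. First I would observe that, because the code is linear and the channel is an erasure channel, Gaussian elimination at the sink fails if and only if some nonzero codeword is supported entirely within the erased coordinates; since this event depends only on the erasure pattern and on the set of codewords, the resulting bound is automatically uniform over $\mathbf{x}$, which is why the statement holds $\forall\mathbf{x}$. Concretely, a message difference $\mathbf{z}\neq\mathbf{0}$ of Hamming weight $w=|\mathrm{supp}(\mathbf{z})|$ produces the codeword $[\mathbf{z},\mathbf{z}^\top\mathbf{A}]$, and decoding fails on account of $\mathbf{z}$ iff (i) all $w$ systematic coordinates in $\mathrm{supp}(\mathbf{z})$ are erased, and (ii) every parity coordinate $j$ with $(\mathbf{z}^\top\mathbf{A})_j=1$ has its local parity $\tilde{y}_j$ erased. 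Since (i) involves only the independent step-two erasures of the systematic bits, a union bound over $\mathbf{z}$ grouped by $w$ gives
\[
P_e^{(N)}(\mathbf{x})\le\sum_{w=1}^N\binom{N}{w}\,\epsilon^w\,\Pr\!\left[\text{all informative parities of }\mathbf{z}\text{ are erased}\right].
\]

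The next step is to exploit the product structure of the random graph ensemble: the columns $\mathbf{a}_j$ of $\mathbf{A}$ are mutually independent, and the step-one and step-two erasures attached to distinct check nodes are independent, so condition (ii) factorizes across the $N$ check nodes. Writing $\Pr[B_w^c]$ for the probability that a single node $j$ is simultaneously \emph{informative} for $\mathbf{z}$ (i.e.\ $(\mathbf{z}^\top\mathbf{A})_j=1$) \emph{and} delivers its parity, the inner probability equals $(1-\Pr[B_w^c])^N$. A short generating-function computation over the in-edges of $j$ (splitting them into those originating in $\mathrm{supp}(\mathbf{z})$ and the rest, and using $pP_e=p_{\text{ch}}/N$) yields an exact closed form for $\Pr[B_w^c]$; the crucial consequence I would isolate is the lower bound $\Pr[B_w^c]\ge(1-\varepsilon_0)\,q_w$, where $q_w=\tfrac12\bigl(1-(1-2p)^w\bigr)$ is the unconditional probability that node $j$ is informative. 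Here $\varepsilon_0$ acts as an effective per-parity erasure probability absorbing both the step-two loss $\epsilon$ and the step-one failure $1-(1-P_e)^{d_j^-}$, with the constant $\tfrac{2}{1-1/e}$ arising from controlling the random in-degree $d_j^-$ conditioned on the node being informative.

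With this reduction I would split the sum at the threshold $w_0=\eta N/\log N$. For small weights $1\le w\le w_0$ one has $pw\le c\eta$, so the elementary inequality $1-e^{-x}\ge x(1-x/2)$ gives $q_w\ge pw(1-c\eta)$, hence $(1-\Pr[B_w^c])^N\le e^{-N(1-\varepsilon_0)q_w}\le N^{-c(1-\varepsilon_0)(1-c\eta)w}$; combining with $\binom{N}{w}\le(eN/w)^w$ bounds each term by $e\epsilon\,N^{1-c(1-\varepsilon_0)(1-c\eta)}$, and summing the at most $w_0$ of them reproduces exactly the polynomial term $\eta e\epsilon\,N^{2-c(1-\varepsilon_0)(1-c\eta)}/\log N$. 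For large weights $w_0<w\le N$, monotonicity of $q_w$ gives $q_w\ge q_{w_0}\ge\tfrac12(1-e^{-2c\eta})$, so every informative check covers $\mathbf{z}$ with probability bounded away from one; the assumption $\epsilon<b_\eta$ is precisely what lets the decay of $(1-\Pr[B_w^c])^N$ dominate the growth of $\binom{N}{w}\epsilon^w$ and collapse the entire tail into the exponential term $(1-b_\eta)^N$. Adding the two contributions, and noting the bound holds for every admissible $\eta$ (over which the exponent may be optimized), yields the claimed inequality.

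I expect the single hardest step to be the lower bound $\Pr[B_w^c]\ge(1-\varepsilon_0)q_w$, i.e.\ pinning down $\varepsilon_0$. The difficulty is that whether node $j$ is informative and whether its parity survives are \emph{coupled} through the shared randomness of the in-edges $\{A_{ij}\}$ and the broadcast erasures: conditioning on informativeness tilts the law of $d_j^-$, and the step-one survival probability $(1-P_e)^{d_j^-}$ must be controlled uniformly over this tilted law. Obtaining a clean constant here, rather than a bound that degrades with $w$, is what forces the particular form of $\varepsilon_0$ and the factor $\tfrac{2}{1-1/e}$, and it is the place where the analysis must be carried out most carefully.
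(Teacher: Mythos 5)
Your proposal is correct and follows essentially the same route as the paper: a union bound over nonzero message differences grouped by Hamming weight (the paper's Lemma~\ref{identical_error} plus the counting in \eqref{Error_Middle}), per-check factorization via independence of the columns of $\mathbf{A}$, the key single-check estimate $\Pr[A^{(i)}]\le \varepsilon_0+(1-\varepsilon_0)\tfrac{1+(1-2p)^k}{2}$ (your $\Pr[B_w^c]\ge(1-\varepsilon_0)q_w$ is exactly its complement, with the same constant $L=\tfrac{2}{1-1/e}+1$ arising from controlling the tilted in-degree), and the same split at $w_0=\eta N/\log N$ yielding the polynomial and exponential terms respectively. You also correctly identify the single-check lemma as the technically delicate step, which is where the paper's Appendix~\ref{PofL1} spends its effort.
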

\begin{proof}
See Section~\ref{Analysis}.
\end{proof}
Thus, we have proved that the expected error probability averaged over the graph code ensemble decays polynomially with $N$. Denote by $A_e$ the event that an estimate error occurs at the sink, \emph{i.e.}, $\hat{\mathbf{x}}\neq \mathbf{x}$, then
\begin{equation}\label{rg_argu}
\begin{split}
  P_e^{(N)} >\Pr(2cN\log N>|\mathcal{E}|)\Pr\left(A_e\mid 2cN\log N> |\mathcal{E}|\right).
\end{split}
\end{equation}
Since the number of edges $|\mathcal{E}|$ in the directed graph is a Binomial random variable$\sim\text{Binomial}(p=\frac{c\log N}{N},N^2)$, using the Chernoff bound \cite{Che_AMS_52}, we can get
\begin{equation}
\begin{split}
  \Pr\left(2c N \log N>|\mathcal{E}|\right)\ge 1-\left(\frac{1}{N}\right)^{\frac{c^2}{2}\log N}.
\end{split}
\end{equation}
Combining with~\eqref{rg_argu} and \eqref{error_exponent},
\begin{equation}
  \Pr\left(A_e\mid 2cN\log N> |\mathcal{E}|\right)<\left(1-\left(\frac{1}{N}\right)^{\frac{c^2}{2}\log N}\right)^{-1}P_e^{(N)},
\end{equation}
which decays polynomially with $N$. This means that there exists a graph code (graph topology) with $\mathcal{O}(N\log N)$ links, and at the same time, achieves any required non-zero error probability $p_\text{tar}$ when $N$ is large enough.
Interestingly, the derivation above implies a more fundamental corollary for erasure coding in point-to-point channels. The following corollary states the result for communication with noise-free circuitry, while the conclusions in this paper (see Theorem~\ref{ER_main_thm}) shows the existence of an LDGM code that is tolerant of noisy encoding and distributed encoding.
\begin{corollary}\label{coding_upb}
For a discrete memoryless \textcolor{black}{point-to-point} BEC with erasure probability $\epsilon$, there exists a systematic linear code with rate\footnote{Generalizing the analysis technique in this paper to $R>\frac{1}{2}$ is trivial, but designing a distributed encoding scheme for the inter-agent graph with $R>\frac{1}{2}$ is not intuitive. For $R=\frac{1}{2}$, each node sends its self-information bit and the local parity, which is practically convenient.} $R=1/2$ and an $N\times 2N$ generator matrix $\mathbf{G}=[\mathbf{I},\mathbf{A}]$ such that the block error probability decreases polynomially with $N$. Moreover, the generator matrix is sparse: the number of ones in $\mathbf{A}$ is $\mathcal{O}(N\log N)$.
\end{corollary}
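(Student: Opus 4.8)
The plan is to obtain the corollary as the noise-free-encoding specialization of Theorem~\ref{ER_main_thm} together with the random-graph derandomization argument already developed around \eqref{rg_argu}. The conceptual bridge is the observation, made in the introduction, that the $\mathcal{GC}$-3 graph code is an LDGM code whose encoding circuitry is noisy precisely because the inter-agent parities are computed over BEC links. A genuine point-to-point BEC with noise-free encoding is therefore the same code with the encoding noise switched off, which corresponds to setting $p_{\text{ch}}=0$ in the scheme of Section~\ref{Algorithm}. In that case no parity bit $y_n$ is ever set to `$e$' during the first step, so the effective per-parity erasure probability seen at the decoder collapses from $\varepsilon_0=(\frac{2}{1-1/e}+1)p_{\text{ch}}+\epsilon$ down to $\varepsilon_0=\epsilon$, the pure channel erasure.

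First I would instantiate Theorem~\ref{ER_main_thm} with $\varepsilon_0=\epsilon$ and a fixed constant $c$ chosen large enough that $2<c(1-\epsilon)(1-c\eta)$ for some admissible $\eta>0$; because shrinking $p_{\text{ch}}$ only increases $b_\eta$ in \eqref{b}, the hypothesis $\epsilon<b_\eta$ of \eqref{N_big_enough} is strictly easier to meet at $p_{\text{ch}}=0$ than in the noisy case, so no new smallness condition on $\epsilon$ is needed beyond what the theorem already requires. The theorem then gives that the ensemble-averaged block error $P_e^{(N)}(\mathbf{x})$, taken over the random graph ensemble of Section~\ref{Upper_Bound} with edge probability $p=\frac{c\log N}{N}$, is bounded by $(1-b_\eta)^N+\eta e\epsilon\,N^{2-c(1-\epsilon)(1-c\eta)}/\log N$, which decays polynomially in $N$.

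Next I would derandomize. Since the ensemble average decays polynomially, the conditioning argument of \eqref{rg_argu} guarantees the existence of a single realization $\mathbf{A}$ of the adjacency matrix whose conditional block error probability still decays polynomially. Simultaneously, $|\mathcal{E}|\sim\text{Binomial}(N^2,\frac{c\log N}{N})$ has mean $cN\log N$ and concentrates by the Chernoff bound used after Theorem~\ref{ER_main_thm}, so with probability $1-\left(\frac{1}{N}\right)^{\frac{c^2}{2}\log N}$ we have $|\mathcal{E}|<2cN\log N$. Intersecting the two high-probability events produces a fixed graph that is simultaneously good (polynomially small error) and sparse ($\mathcal{O}(N\log N)$ ones in $\mathbf{A}$). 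Appending the systematic block $\mathbf{I}$ yields the $N\times 2N$ generator $\mathbf{G}=[\mathbf{I},\mathbf{A}]$ of rate $R=1/2$ claimed in the statement, and the decoder is exactly the Gaussian elimination of Section~\ref{Algorithm}, now run on a channel whose only erasures come from the BEC.

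The one step that requires care—and the main obstacle—is justifying the boundary value $p_{\text{ch}}=0$, since Theorem~\ref{ER_main_thm} is stated for $p_{\text{ch}}\in(0,\frac{1}{2})$. The hard part is therefore to verify that the error analysis underlying the theorem (the bound on the number of erased parity coordinates and the resulting rank deficiency of $\mathbf{G}$ restricted to the received positions) degrades continuously as $p_{\text{ch}}\downarrow 0$, or equivalently to re-run that analysis directly with the encoding step assumed error-free. I expect this to be routine: every place where $p_{\text{ch}}$ enters the proof does so only to inflate the erasure probability of the parity bits, and setting it to zero can only reduce the erasure count, so the same combinatorial counting of recoverable information bits goes through verbatim with $\varepsilon_0$ replaced by $\epsilon$.
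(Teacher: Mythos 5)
Your construction and derandomization are exactly the paper's: sample a directed Erd\H{o}s--R\'enyi graph with $p=\frac{c\log N}{N}$, take $\mathbf{G}=[\mathbf{I},\mathbf{A}]$, invoke Theorem~\ref{ER_main_thm} for the ensemble-averaged error, and combine the Chernoff bound on $|\mathcal{E}|\sim\text{Binomial}(N^2,p)$ with the conditioning identity \eqref{rm_cd_argu} to expurgate down to a single sparse good code. The only place you diverge is the step you correctly flag as the delicate one: how to pass from noisy distributed encoding to noise-free centralized encoding. You propose to re-instantiate the whole analysis at the boundary $p_{\text{ch}}=0$, replacing $\varepsilon_0$ by $\epsilon$, which forces you to argue continuity of Lemma~\ref{Lemma1} and Theorem~\ref{ER_main_thm} as $p_{\text{ch}}\downarrow 0$ (and sits awkwardly with \eqref{Broadcasting_Times}, where $p_{\text{ch}}=0$ formally makes the repetition count $t$ infinite). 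The paper avoids this entirely with a one-line monotonicity argument: the erasure events in the point-to-point setting are a strict subset of those in the in-network scheme (first-step erasures simply never occur), so the \emph{existing} upper bound of Theorem~\ref{ER_main_thm}, with the original $\varepsilon_0=(\frac{2}{1-1/e}+1)p_{\text{ch}}+\epsilon$ for any fixed admissible $p_{\text{ch}}>0$, already dominates the point-to-point error probability verbatim. Your route, if carried out, buys a slightly better exponent (the condition $2<c(1-\epsilon)(1-c\eta)$ instead of $2<c(1-\varepsilon_0)(1-c\eta)$) at the cost of redoing the lemma at a boundary value the theorem excludes; the paper's route costs nothing and suffices for the stated $\mathcal{O}(N\log N)$ sparsity and polynomial decay. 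Both are valid; your claim that the re-derivation is routine is accurate, since $p_{\text{ch}}$ enters the proof of Lemma~\ref{Lemma1} only through the probability of the first-step erasure event $A_2^{(i)}$, which vanishes.
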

\begin{proof}
See Appendix~\ref{pf_coding_upb}.
\end{proof}
\begin{remark}
In an extended version \cite[Section VI]{Yang_Arx_15_2}, we discuss a distributed coding scheme, called $\mathcal{GC}$-2, for a geometric graph. The $\mathcal{GC}$-2 code divides the geometric graph into clusters and conquer each cluster using a dense code with length $\mathcal{O}(\log N)$. Notice that the $\mathcal{GC}$-2 code requires the same sparsity $\Theta(N\log N)$ and the same number of broadcasts (and hence the same scale in energy consumption) as $\mathcal{GC}$-3. However, the scheduling cost of $\mathcal{GC}$-2 is high. Further, it requires a powerful code with length $\mathcal{O}(\log N)$, which is not practical for moderate $N$ (this is also the problem of the coding scheme in \cite{Gal_TIT_88}). Nonetheless, the graph topology for the $\mathcal{GC}$-2 code is deterministic, which does not require ensemble-type arguments.
\end{remark}
\subsection{Gap Between the Upper and the Lower Bounds}\label{Best_Degree}
In this part, we compare the energy consumption and the graph sparseness of the $\mathcal{GC}$-3 graph code with the two lower bounds in Theorem~\ref{best_degree}. First, we examine Problem 1 when $D=\Theta(N\log N)$ and $p_\text{tar}=\Theta\left(\frac{1}{N^\gamma}\right),\gamma\in(0,1)$, which is the same case as the $\mathcal{GC}$-3 Graph Code. In this case, the lower bound \eqref{p1lb} has the following form:
\begin{equation}
\begin{split}
  E=\Omega \left(\max\left(NE_1 ,\min \left( N{{E}_{1}}\log N,NE_2\right) \right)\right).
\end{split}
\end{equation}
Under the mild condition $\frac{E_2}{E_1}>\frac{1}{\log N}$, the lower bound can be simplified as
\begin{equation}
  E^\text{lower}=\Omega \left(\max\left(NE_1,NE_2\right)\right).
\end{equation}
The energy consumption of the $\mathcal{GC}$-3 graph code has the form $E^\text{upper}=\Theta \left(\max\left(NE_1,NE_2\log\log N\right)\right)$ (see~\eqref{Low_Diam_Up_Bd}), which has a $\Theta(\log\log N)$ multiplicative gap with the lower bound. Notice that if we make the assumption $E_1>>E_2$, \emph{i.e.}, the inter-agent communications are cheaper, the two bounds have the same scaling $\Theta(NE_1)$.

Then, we examine Problem 2 when $E_M=\Theta \left(\max\left(NE_1,NE_2\log\log N\right)\right)$ and $p_\text{tar}=\Theta\left(\frac{1}{N^\gamma}\right),\gamma\in(0,1)$, which is also the same case as the $\mathcal{GC}$-3 Graph Code. Notice that under mild assumptions, $E_M=\Theta \left(\max\left(NE_1,NE_2\log\log N\right)\right)=o(E_1N\log N)$, which means that the condition $E_M< \frac{N{{E}_{1}}}{2\ln \left( 1/\epsilon  \right)}\ln \frac{N}{2\delta }$ in Theorem~\ref{best_degree} holds when $N$ is large enough. In this case, the lower bound \eqref{p2lb} takes the form
\begin{equation}
\begin{split}
  |\mathcal{E}|=\Omega\left(\min\left({\frac{{{N}}{{E}_{2}}}{{{E}_{1}}}\log N},{\frac{{{N}}}{\log\log N}\log N}\right)\right).
\end{split}
\end{equation}
The number of edges of the $\mathcal{GC}$-3 graph code has the scale $|\mathcal{E}|=\Theta(N\log N)$. Therefore, the ratio between the upper and the lower bound satisfies that
\begin{equation}
  \frac{|\mathcal{E}^\text{upper}|}{|\mathcal{E}^\text{lower}|}=\mathcal{O}\left(\max(\log\log N,E_1/E_2)\right).
\end{equation}

\subsection{An Upper Bound on the Error Probability}\label{Analysis}
The Lemma~\ref{identical_error} in the following states that $P_e^{\mathcal{G}}(\mathbf{x})$ is upper bounded by an expression which is independent of the input $\mathbf{x}$ (self-information bits). In Lemma~\ref{identical_error}, each term on the RHS of~\eqref{identical_error_equation} can be interpreted as the probability of the existence of a non-zero vector input $\mathbf{x}_0$ that is confused with the all-zero vector $\mathbf{0}_N$ after all the non-zero entries of $\mathbf{x}_0^\top\cdot[\mathbf{I},\mathbf{A}]$ are erased, in which case $\mathbf{x}_{0}$ is indistinguishable from the all zero channel input. For example, suppose the code length is $2N=6$. The sent codeword $\mathbf{x}_0^\top\cdot[\mathbf{I},\mathbf{A}]=[x_1,0,0,x_4,0,x_6]$ and the output at the sink happens to be $\mathbf{r}^\top=[e,0,0,e,0,e]$. In this case, we cannot distinguish between the input vector $\mathbf{x}_0$ and $\mathbf{0}^N$ based on the output at the sink.

\begin{lemma}\label{identical_error}
The error probability $P_e^{\mathcal{G}}$ can be upper-bounded by
\begin{equation}\label{identical_error_equation}
  P_e^{\mathcal{G}}(\mathbf{x})\le \mathop \sum \limits_{\mathbf{x}_0 \in \{0,1\}^N\setminus \{\mathbf{0}_N\}} P_e^{\mathcal{G}}(\mathbf{x}_0\rightarrow \mathbf{0}^N),\forall{\mathbf{x}\in \{0,1\}^N},
\end{equation}
where $\mathbf{0}_N$ is the $N$-dimensional zero vector.
\end{lemma}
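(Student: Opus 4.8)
The plan is to exploit two facts: that the code $\mathbf{G}=[\mathbf{I},\mathbf{A}]$ is linear over $\mathrm{GF}(2)$, and that on a BEC every received coordinate is either the correct code bit or the erasure symbol `$e$', with the set of erased coordinates decided purely by the channel, independently of the transmitted values. I would begin by making the second point precise. Let $\mathcal{S}\subseteq\{1,\dots,2N\}$ denote the set of erased coordinates of the received vector $\mathbf{r}=[\tilde{\mathbf{x}}^\top,\tilde{\mathbf{y}}^\top]$. A coordinate in the $\tilde{\mathbf{x}}$-block is erased exactly when its single second-step transmission is erased, while coordinate $n$ in the $\tilde{\mathbf{y}}$-block is erased when either some in-neighbor bit fails to reach $v_n$ during the first (broadcasting) step or the second-step transmission of $y_n$ is erased. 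None of these events depends on the bit values $\mathbf{x}$, so $\mathcal{S}$ is statistically independent of $\mathbf{x}$.

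Next I would characterize the error event for a fixed input $\mathbf{x}$. Because a BEC never flips a bit, the true input always satisfies every linear constraint imposed by the non-erased coordinates of $\mathbf{r}$; hence Gaussian elimination returns the correct $\mathbf{x}$ unless the system is under-determined, i.e. unless a second input $\mathbf{x}'\neq\mathbf{x}$ is equally consistent on every non-erased coordinate. Setting $\mathbf{z}=\mathbf{x}\oplus\mathbf{x}'$ and invoking linearity, this is equivalent to the existence of a nonzero $\mathbf{z}$ for which $\mathbf{z}^\top[\mathbf{I},\mathbf{A}]$ vanishes on every non-erased coordinate, i.e. $\mathrm{supp}(\mathbf{z}^\top[\mathbf{I},\mathbf{A}])\subseteq\mathcal{S}$. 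Crucially this event makes no reference to $\mathbf{x}$, so it is literally the same event for every input.

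I would then finish with a union bound over the nonzero difference vectors:
\begin{equation}
P_e^{\mathcal{G}}(\mathbf{x})=\Pr_{\mathcal{S}}\Big[\exists\,\mathbf{z}\neq\mathbf{0}_N:\mathrm{supp}(\mathbf{z}^\top[\mathbf{I},\mathbf{A}])\subseteq\mathcal{S}\Big]\le\sum_{\mathbf{z}\in\{0,1\}^N\setminus\{\mathbf{0}_N\}}\Pr_{\mathcal{S}}\big[\mathrm{supp}(\mathbf{z}^\top[\mathbf{I},\mathbf{A}])\subseteq\mathcal{S}\big].
\end{equation}
Finally I would identify each summand with the pairwise confusion probability $P_e^{\mathcal{G}}(\mathbf{z}\to\mathbf{0}^N)$: the event $\{\mathrm{supp}(\mathbf{z}^\top[\mathbf{I},\mathbf{A}])\subseteq\mathcal{S}\}$ is precisely the event that all non-zero entries of $\mathbf{z}^\top[\mathbf{I},\mathbf{A}]$ are erased, rendering $\mathbf{z}$ indistinguishable from $\mathbf{0}^N$ at the sink, exactly as in the example following the lemma statement. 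Relabeling $\mathbf{z}$ as $\mathbf{x}_0$ yields the claimed inequality, whose right-hand side is manifestly independent of $\mathbf{x}$.

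The main obstacle I anticipate is not the union bound but the careful verification in the first step that $\mathcal{S}$ is genuinely value-independent despite the two-stage encoding — in particular that a local parity $y_n$ becomes `$e$' only through channel erasures in the in-neighborhood broadcasts or in the final transmission, never through the particular value of the parity. Once this value-independence and the BEC's \emph{erase-but-never-flip} property are established, reducing the error event to a support condition on codeword differences and applying the union bound are routine consequences of linearity.
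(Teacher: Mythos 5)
Your proposal is correct and follows essentially the same route as the paper: both arguments rest on the linearity of $[\mathbf{I},\mathbf{A}]$, the fact that the BEC erasure pattern is independent of the transmitted values (so confusion of $\mathbf{x}$ with $\mathbf{x}'$ reduces to confusion of $\mathbf{x}\oplus\mathbf{x}'$ with $\mathbf{0}_N$), and a union bound over the nonzero difference vectors. Your reorganization — characterizing the error event directly as a support condition on $\mathbf{z}^\top[\mathbf{I},\mathbf{A}]$ relative to the erasure set, rather than first stating a pairwise-confusion lemma — is only a presentational difference (and incidentally shows $P_e^{\mathcal{G}}(\mathbf{x})$ is the same for every $\mathbf{x}$, slightly more than the bound requires).
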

\begin{proof}
See Appendix~\ref{pf_Lemma2}.
\end{proof}

Therefore, to upper-bound $P_e^\mathcal{G}(\mathbf{x})$, we only need to consider the event mentioned above, \emph{i.e.}, a non-zero input $\mathbf{x}_0$ of self-information bits is confused with the all-zero vector $\mathbf{0}^N$. This happens if and only if each entry of the received vector $\mathbf{r}^\top$ at the sink is either zero or `$e$'. When $\mathbf{x}_0$ and the graph $\mathcal{G}$ are both fixed, different entries in $\mathbf{r}^\top$ are independent of each other. Thus, the ambiguity probability $P_e^{\mathcal{G}}(\mathbf{x}_0\rightarrow \mathbf{0}^N)$ for a fixed non-zero input $\mathbf{x}_0$ and a fixed graph instance $\mathcal{G}$ is the product of the corresponding ambiguity probability of each entry in $\mathbf{r}^\top$ (being a zero or a `$e$').

The ambiguity event of each entry may occur due to structural deficiencies in the graph topology as well as due to erasures. In particular, three events contribute to the error at the $i$-th entry of $\mathbf{r}^\top$: the product of $\mathbf{x}_0^\top$ and the $i$-th column of $[\mathbf{I},\mathbf{A}]$ is zero (topology deficiency); the $i$-th entry of $\mathbf{r}^\top$ is `$e$' due to erasures in the first step; the $i$-th entry is `$e$' due to an erasure in the second step. We denote these three events respectively by $A_{1}^{(i)}(\mathbf{x}_0)$, $A_{2}^{(i)}(\mathbf{x}_0)$ and $A_{3}^{(i)}(\mathbf{x}_0)$, where the superscript $i$ and the argument $\mathbf{x}_0^\top$ mean that the events are for the $i$-th entry and conditioned on a fixed message vector $\mathbf{x}_0^\top$. The ambiguity event on the $i$-th entry is the union of the above three events. Denote by the union event as $A^{(i)}(\mathbf{x}_0)=A_{1}^{(i)}(\mathbf{x}_0)\cup A_{2}^{(i)}(\mathbf{x}_0)\cup A_{3}^{(i)}(\mathbf{x}_0)$. By applying the union bound over all possible inputs, the error probability $P_e^{\mathcal{G}}(\mathbf{x})$ (for an arbitrary input $\mathbf{x}$) can be upper bounded by
\begin{equation}\label{Error_pre}
\begin{split}
  P_e^{\mathcal{G}}(\mathbf{x})\le \mathop \sum \limits_{\mathbf{x}_0 \in \{0,1\}^N\setminus\{\mathbf{0}^N\}} \mathop\prod \limits_{i=1}^{2N} \Pr[A^{(i)}(\mathbf{x}_0)|\mathcal{G}],
\end{split}
\end{equation}
In this expression, the randomness of $\mathcal{G}$ lies in the random edge connections. We use the binary indicator $E_{mn}$ to denote if there is a directed edge from $v_m$ to $v_n$. Note that we allow self-loops. By assumption, all random variables in $\{E_{mn}\}_{m,n=1}^N$ are mutually independent\footnote{Note a bidirectional edge in the current setting corresponds to two independently generated directional edges.}.
Therefore
\begin{equation}\label{6142}
\begin{split}
  &P_e^{(N)}(\mathbf{x})=\mathbb{E}_\mathcal{G} [P_e^{\mathcal{G}}(\mathbf{x})]\\
  \overset{(a)}{\le}&\mathop \sum \limits_{\mathbf{x}_0^\top \in \{0,1\}^N\setminus\{\mathbf{0}^N\}}
  \mathop\prod \limits_{i=1}^{2N} \mathbb{E}_\mathcal{G}\left[\Pr\left[A^{(i)}(\mathbf{x}_0)\left|E_{ni},1\le n\le N\right.\right]\right]\\
  \overset{(b)}{=}&\mathop \sum \limits_{\mathbf{x}_0^\top \in \{0,1\}^N\setminus\{\mathbf{0}^N\}}
  \mathop\prod \limits_{i=1}^{2N} \Pr[A^{(i)}(\mathbf{x}_0)],
\end{split}
\end{equation}
where the equality (a) holds because in the in-network computing scheme, the self-information bit $x_i$ and the local parity bit $y_i$ only depend on the in-edges of $v_i$, \emph{i.e.}, the edge set $\mathcal{E}_i^{\text{in}}=\{E_{ni}|1\le n\le N\}$, and the fact that different in-edge sets $\{E_{ni}\}_{1\leq n\leq N}$ and $\{E_{nj}\}_{1\leq n\leq N}$ are independent (by the independence of link generation) for any pair $(i,j)$ with $i\neq j$, and the equality (b) follows from the iterative expectation.
\begin{lemma}\label{Lemma1}
Define $k$ as the number of ones in $\mathbf{x}_0^\top$ and $\varepsilon_0=(\frac{2}{1-1/e}+1)p_{\text{ch}}+\epsilon$, where $\epsilon$ is the erasure probability of the BECs and $p_{\text{ch}}$ is a constant defined in~\eqref{Broadcasting_Times}. Further suppose $c\log N>1$. Then, for $1\le i\le N$, it holds that
\begin{equation}\label{Each_error_decompose_2}
  \mathop\prod \limits_{i=1}^{N}\Pr[A^{(i)}(\mathbf{x}_0)]= \epsilon^k.
\end{equation}
For $N+1\le i\le 2N$, it holds that
\begin{equation}\label{Each_error_decompose}
\begin{split}
  \Pr[A^{(i)}(\mathbf{x}_0)]\le \varepsilon_0+(1-\varepsilon_0)\cdot\frac{1+(1-2p)^k}{2},
\end{split}
\end{equation}
where $p=\frac{c\log N}{N}$ is the connection probability.
\end{lemma}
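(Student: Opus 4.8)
The plan is to prove the two claims separately, one for the systematic block $1\le i\le N$ and one for the parity block $N+1\le i\le 2N$.

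For the systematic block I would argue position by position. The $i$-th column of $[\mathbf{I},\mathbf{A}]$ is the unit vector $\mathbf{e}_i$, so $\mathbf{x}_0^\top$ applied to it is just the $i$-th coordinate of $\mathbf{x}_0$; the topology-deficiency event $A_1^{(i)}$ therefore occurs with certainty precisely when that coordinate is $0$. Moreover the systematic entry reaches the sink only through its single second-step direct transmission, so no first-step erasure can affect it and $A_2^{(i)}$ is vacuous. Consequently $\Pr[A^{(i)}(\mathbf{x}_0)]=1$ at each of the zero coordinates and $\Pr[A^{(i)}(\mathbf{x}_0)]=\epsilon$ (a single direct erasure) at each of the $k$ one-coordinates, and multiplying over $1\le i\le N$ gives $\prod_{i=1}^N\Pr[A^{(i)}(\mathbf{x}_0)]=\epsilon^k$.

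For a parity position $i$, set $n=i-N$ and decompose $A^{(i)}=A_1^{(i)}\cup\big(\overline{A_1^{(i)}}\cap(A_2^{(i)}\cup A_3^{(i)})\big)$ into the two disjoint pieces. The value of $y_n$ is the parity of the number of active in-neighbours of $v_n$, and each of the $k$ support nodes is an in-neighbour independently with probability $p$, so the even-binomial identity gives $\Pr[A_1^{(i)}]=\frac{1+(1-2p)^k}{2}$. Writing $\Pr[A^{(i)}]=\Pr[A_1^{(i)}]+\Pr[\overline{A_1^{(i)}}]\,\Pr[A_2^{(i)}\cup A_3^{(i)}\mid\overline{A_1^{(i)}}]$, the claimed bound reduces to the single conditional estimate $\Pr[A_2^{(i)}\cup A_3^{(i)}\mid\overline{A_1^{(i)}}]\le\varepsilon_0$, since this yields $\Pr[A^{(i)}]\le\varepsilon_0+(1-\varepsilon_0)\Pr[A_1^{(i)}]$.

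To establish that conditional estimate I would split the first-step erasure $A_2^{(i)}$ according to whether the erased in-neighbour is inactive or active. Erasures among the $N-k$ inactive in-neighbours depend only on edges out of the complement of the support, hence are independent of the parity event $\overline{A_1^{(i)}}$, and a union bound controls them by $1-(1-pP_e)^{N-k}\le NpP_e=p_{\text{ch}}$, using $P_e=p_{\text{ch}}/(c\log N)$ from \eqref{one_bit_error_equation} together with $p=\frac{c\log N}{N}$. For the active in-neighbours I would condition on their number $a$: a union bound gives an erasure probability at most $aP_e$, so the active contribution is at most $P_e\,\mathbb{E}[a\mid a\text{ odd}]\le P_e\,\mathbb{E}[a]/\Pr[a\text{ odd}]=\frac{2kpP_e}{1-(1-2p)^k}$. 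The independent second-step erasure $A_3^{(i)}$ adds $\epsilon$, and summing the three terms is what produces $\varepsilon_0=(\frac{2}{1-1/e}+1)p_{\text{ch}}+\epsilon$.

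The step I expect to be the main obstacle is bounding the active ratio $\frac{2kpP_e}{1-(1-2p)^k}$ uniformly in $k$, because numerator and denominator both tend to $0$ as $kp\to 0$ and a naive estimate diverges. The resolution I would use is the elementary inequality $1-(1-2p)^k\ge(1-1/e)\min(2pk,1)$: when $2pk\le 1$ the ratio collapses to $\frac{P_e}{1-1/e}$, which is negligible since $P_e<p_{\text{ch}}$, whereas when $2pk>1$ it is at most $\frac{2kpP_e}{1-1/e}\le\frac{2p_{\text{ch}}}{1-1/e}$ because $kpP_e\le NpP_e=p_{\text{ch}}$. This is exactly where the constant $\frac{2}{1-1/e}$ originates, and matching the two regimes against the stated $\varepsilon_0$ is the only genuinely delicate computation.
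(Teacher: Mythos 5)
Your proposal is correct and follows essentially the same route as the paper's proof: the same position-by-position treatment of the systematic block, the same even-binomial computation of $\Pr[A_1^{(i)}]=\frac{1+(1-2p)^k}{2}$, the same split of the first-step erasure event into inactive and active in-neighbours, and the same case analysis at $kp=\frac12$ producing the identical constant $L=\frac{2}{1-1/e}+1$ in $\varepsilon_0$. The only differences are cosmetic: you condition on $\overline{A_1^{(i)}}$ and use $\mathbb{E}[a\mid a\text{ odd}]\le \mathbb{E}[a]/\Pr[a\text{ odd}]$ together with $1-e^{-x}\ge(1-1/e)x$ on $[0,1]$, where the paper carries out the explicit binomial summations over $l$ and $j$ and uses $1-(1-2p)^k\ge 2kp(1-kp)$; both reorganizations yield the same ratio $\frac{2kp}{1-(1-2p)^k}$ and the same final bound.
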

\begin{proof}
See Appendix~\ref{PofL1} for a complete proof. The main idea is to directly compute the probabilities of three error events $A_1^{(i)}$, $A_2^{(i)}$ and $A_3^{(i)}$ for each bit $x_i$.
\end{proof}

Based on Lemma~\ref{Lemma1} and simple counting arguments, note that~\eqref{6142} may be bounded as
\begin{equation}\label{Error_Middle}
  P_e^{(N)}(\mathbf{x})\le \mathop\sum\limits_{k=1}^N\binom{N}{k}\epsilon^k \left[\varepsilon_0+(1-\varepsilon_0)\cdot\frac{1+(1-2p)^k}{2}\right]^N.
\end{equation}
By upper-bounding the RHS of \eqref{Error_Middle} respectively for $k=\mathcal{O}(N/\log N)$ and $k=\Omega(N/\log N)$, we obtain Theorem~\ref{Theorem1}. The remaining part of the proof can be found in Appendix~\ref{PofT1}.

\section{Conclusions}
In this paper, we obtain both upper and lower scaling bounds on the energy consumption and the number of edges in the inter-agent broadcast graph for the problem of data collection in a two-layer network. In the directed Erd$\ddot{o}$s-R$\acute{e}$nyi graph ensemble, the average error probability of the proposed distributed coding scheme decays polynomially with the size of the graph. We show that the obtained code is almost optimal in terms of sparseness (with minimum number of ones in the generator matrix) except for a $\Theta(\log\log N)$ multiple gap. Finally, we show a connection of our result to LDGM codes with noisy and distributed encoding.

\bibliographystyle{ieeetr}
\bibliography{rough}
\appendices

\section{Proof of Theorem~\ref{best_degree}}\label{PofThm2}
First, we state a lemma that we will use in the proof.
\begin{lemma}\label{Theo2_lemma}
Suppose the constants $\delta ,\varepsilon \in \left( 0,\frac{1}{2} \right),A,N>0$. Suppose $\frac{{{N}^{2}}}{\delta A}>{{e}^{1.5}}$, and suppose the minimization problem
\begin{align}
\underset{\mathbf{x}\in \mathbb{R}^N,\mathbf{a}\in \mathbb{R}^N}{\mathop{\min }}\,\sum\limits_{n=1}^{N}{{{x}_{i}}},
\text{s}\text{.t}\text{. }\left\{ \begin{matrix}
   \sum\limits_{n=1}^{N}{{{a}_{n}}}\le A, {{a}_{n}}\ge 1,\forall n. \\
   \sum\limits_{n=1}^{N}{{{\epsilon }^{{{a}_{n}}{{x}_{n}}}}}<\delta ,
\end{matrix} \right.\end{align}
has a solution, \emph{i.e.}, the feasible region is not empty. Then, the solution of the above minimization problem satisfies that
\begin{equation}
  \sum\limits_{n=1}^{N}{x_{i}^{*}}\ge \frac{{{N}^{2}}}{A\ln \left( 1/\epsilon  \right)}\ln \frac{N}{\delta }.
\end{equation}
\end{lemma}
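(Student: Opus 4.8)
The plan is to lower-bound the optimal value by Lagrangian weak duality: for a minimization problem this yields a certified lower bound even though the feasible set here is nonconvex, so I never have to prove that any candidate is a global primal optimum. Throughout write $b:=\ln(1/\epsilon)>0$ and $L:=\ln(N/\delta)>0$ (positive because $\delta<\tfrac12<N$). First I would extract two facts from feasibility. Since the $N$ nonnegative terms of $\sum_n\epsilon^{a_nx_n}$ sum to less than $\delta$, each satisfies $\epsilon^{a_nx_n}\le\delta<1$, forcing $a_nx_n>0$ and hence, with $a_n\ge1$, $x_n>0$; moreover $a_n\ge1$ gives $N\le\sum_n a_n\le A$, so $A\ge N$. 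I would also flag at the outset why the obvious approach fails: a tangent-line bound on $e^{-u}$ (equivalently AM-GM) gives only $\sum_n a_nx_n\ge \tfrac{N}{b}L$, and passing from $\sum_n a_nx_n$ to $\sum_n x_n$ via $a_n\le A$ throws away a factor of $N$. The actual content of the lemma is that the optimizer must balance the degree budget, which is visible only through the genuine joint optimization over $(\mathbf{x},\mathbf{a})$.

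Next I would dualize the two coupling constraints with the multipliers predicted by the conjectured balanced optimum, $\lambda=\frac{N^2}{Ab\delta}\ge0$ and $\mu=\frac{N^2L}{bA^2}\ge0$, keeping $a_n\ge1$ as a domain constraint. Weak duality gives
\begin{equation}
V\ \ge\ -\lambda\delta-\mu A+\sum_{n=1}^{N}\ \min_{a_n\ge1,\;x_n\in\mathbb{R}}\Big[x_n+\lambda\,\epsilon^{a_nx_n}+\mu a_n\Big],
\end{equation}
where $V=\sum_n x_n^{*}$ is the quantity to be bounded. The inner problem separates into $N$ identical scalar problems. Minimizing first over $x_n$ (the map $x\mapsto x+\lambda e^{-bax}$ is strictly convex, with minimizer $x^{*}=\tfrac{1}{ba}\ln(\lambda ba)$) collapses each summand to $\Phi(a):=\frac{1+\ln(\lambda ba)}{ba}+\mu a$, so everything reduces to computing $\min_{a\ge1}\Phi(a)$.

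The heart of the proof is this scalar minimization, and I expect it to be the main obstacle. Differentiating gives $\Phi'(a)=\mu-\frac{\ln(\lambda ba)}{ba^2}$, and with the chosen multipliers one checks directly that $a_0:=A/N$ (which is $\ge1$ since $A\ge N$) is a stationary point, that $\Phi''(a_0)=\frac{2L-1}{ba_0^{3}}>0$ whenever $L>\tfrac12$, and that the only other stationary point lies in $(0,a_0)$ and is a local maximum. Hence on $[1,\infty)$ one has $\min\Phi=\min\{\Phi(1),\Phi(a_0)\}$, and the whole argument reduces to the single comparison $\Phi(a_0)\le\Phi(1)$. Writing $\rho:=A/N\ge1$, this comparison is equivalent to an elementary inequality $L\ge R(\rho)$ with $R(\rho)=\frac{\ln\rho-1+1/\rho}{(1-1/\rho)^2}$; since $R(\rho)-\ln\rho$ stays below a small absolute constant for all $\rho\ge1$, the hypothesis $\frac{N^2}{\delta A}>e^{1.5}$ — which is exactly $\ln\rho<L-\tfrac32$ — secures it with room to spare. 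This is where the hypothesis is consumed: it guarantees that the \emph{balanced} configuration $a_0=A/N$ beats the \emph{degenerate} boundary configuration $a=1$, the delicate point that the naive bounds cannot see.

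Finally I would substitute $a=a_0=A/N$ and evaluate the three contributions in the dual bound. Using $\lambda b a_0=N/\delta=e^{L}$, one gets $-\lambda\delta=-\frac{N^2}{Ab}$, $-\mu A=-\frac{N^2L}{bA}$, and $N\Phi(a_0)=\frac{N^2}{bA}(2L+1)$; these telescope to
\begin{equation}
V\ \ge\ \frac{N^2L}{bA}\ =\ \frac{N^2}{A\ln(1/\epsilon)}\ln\frac{N}{\delta},
\end{equation}
which is precisely the claimed bound. The only remaining housekeeping is to confirm $\lambda,\mu\ge0$ (immediate) and that the stationarity manipulations respect the domain $a\ge1$, $x\in\mathbb{R}$, both of which follow from $A\ge N$ and $L>0$.
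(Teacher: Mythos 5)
Your proof is correct, and it takes a genuinely different route from the paper's. The paper first fixes $\mathbf{a}$ and solves the inner (convex-in-$\mathbf{x}$) problem exactly via KKT, obtaining $x_n=\frac{1}{\ln(1/\epsilon)}\frac{1}{a_n}\bigl(\ln a_n+\ln(\frac{1}{\delta}\sum_l\frac{1}{a_l})\bigr)$ in closed form; it then lower-bounds $\sum_n x_n$ by combining the harmonic-mean bound $\sum_n 1/a_n\ge N^2/A$ with Jensen's inequality applied to the convex, decreasing function $a\mapsto\frac{1}{a}\bigl(\ln\frac{N^2}{\delta A}+\ln a\bigr)$ on $[1,\infty)$ — and this is exactly where the hypothesis $\frac{N^2}{\delta A}>e^{1.5}$ is consumed (it guarantees $f''\ge 0$ on $[1,\infty)$, since $f''(x)=\frac{1}{x^3}(2\ln\frac{N^2}{\delta A}+2\ln x-3)$). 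You instead certify the bound by weak duality with explicitly guessed multipliers, reducing everything to the scalar comparison $\Phi(A/N)\le\Phi(1)$, and consume the same hypothesis there via $R(\rho)\le\ln\rho+\frac12$ (which indeed holds for all $\rho\ge1$, with equality only at $\rho=1$, so your "small absolute constant" is $\frac12$ and the hypothesis leaves slack $1$). I checked your stationarity computations, the classification of the two critical points of $\Phi$, and the final telescoping $-\lambda\delta-\mu A+N\Phi(a_0)=\frac{N^2L}{bA}$; all are correct. What each approach buys: the paper's is shorter once the KKT closed form is accepted and sidesteps any multiplier guessing, while yours never needs to argue that a stationary point is a global primal optimum of the joint (nonconvex) problem — the dual bound is a certificate by construction — and it isolates cleanly the one substantive fact, namely that the balanced allocation $a_n=A/N$ dominates the boundary allocation $a_n=1$. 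Both arguments produce the identical constant, and in fact your route needs only $\ln\frac{N^2}{\delta A}>\frac12$ rather than $>\frac32$, so it proves the lemma under a marginally weaker hypothesis.
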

\begin{proof}
First, consider the case when ${{a}_{1}},\ldots ,{{a}_{N}}\ge 1$ are fixed. In this case, it can be easily shown in the KKT conditions that the minimization is obtained when
\[{{\epsilon }^{{{a}_{n}}{{x}_{n}}}}=\frac{\delta }{{{a}_{n}}\sum\limits_{l=1}^{N}{\frac{1}{{{a}_{l}}}}},\]
which is equivalent to
\begin{equation}\label{xge}
\begin{split}
  {{x}_{n}}=\frac{1}{\ln \left( 1/\epsilon  \right)}\frac{1}{{{a}_{n}}}\ln {{a}_{n}}+\frac{1}{\ln \left( 1/\epsilon  \right)}\frac{1}{{{a}_{n}}}\ln \left( \frac{1}{\delta }\sum\limits_{l=1}^{N}{\frac{1}{{{a}_{l}}}} \right).
\end{split}
\end{equation}
Since $\sum\limits_{n=1}^{N}{{{a}_{n}}}\le A,$ we have that $\sum\limits_{n=1}^{N}{\frac{1}{{{a}_{n}}}}\ge \frac{{{N}^{2}}}{A}.$ Therefore, for fixed ${{a}_{1}},\ldots ,{{a}_{N}}\ge 1$, summing up~\eqref{xge} for all $n$ and plug in $\sum\limits_{n=1}^{N}{\frac{1}{{{a}_{n}}}}\ge \frac{{{N}^{2}}}{A}$, we get
\begin{equation}
\begin{split}
\sum\limits_{n=1}^{N}{{{x}_{n}}}\ge &\frac{1}{\ln \left( 1/\epsilon  \right)}\sum\limits_{n=1}^{N}{\frac{1}{{{a}_{n}}}\ln {{a}_{n}} }+\frac{1}{\ln \left( 1/\epsilon  \right)}\ln \left( \frac{{{N}^{2}}}{\delta A} \right)\sum\limits_{n=1}^{N}{\frac{1}{{{a}_{n}}}}\\
=& \frac{1}{\ln \left( 1/\epsilon  \right)}\sum\limits_{n=1}^{N}{\frac{1}{{{a}_{n}}}\left(\ln {{a}_{n}}+ \ln \frac{{{N}^{2}}}{\delta A} \right)}.
\end{split}
\end{equation}
When $B:=\frac{{{N}^{2}}}{\delta A}>{{e}^{1.5}}$, we can prove that the function $f(x)=\frac{1}{x}\left( B+\ln x \right)$ is convex in $[1,\infty]$. Therefore, the function $\frac{1}{{{a}_{n}}}\left( B+\ln {{a}_{n}} \right)$ is convex in ${{a}_{n}}$. Using the Jensen's inequality, we have that
\begin{equation}
  \sum\limits_{n=1}^{N}{{{x}_{n}}}\ge \frac{1}{\ln \left( 1/\epsilon  \right)}N\cdot \frac{N}{A}\left( B+\ln \frac{A}{N} \right)=\frac{{{N}^{2}}}{A\ln \left( 1/\epsilon  \right)}\ln \frac{N}{\delta }.
\end{equation}\end{proof}
For the $n$-th node, the probability ${{p}_{\text{n}}}$ that all ${{t}_{n,1}}$ transmissions and ${{t}_{n,2}}$ broadcasts are erased is lower bounded by
\begin{equation}\label{pn_l}
  {{p}_{n}}>{{\epsilon }^{{{t}_{n,1}}+{{d}_{n}}{{t}_{n,2}}}}.
\end{equation}
If this event happens for any node, all instant messages cannot be computed reliably, because at least all information about $x_n$ is erased. Thus, we have
\begin{equation}
  P_e^{\mathcal{G}}>1-\mathop\prod\limits_{n=1}^N (1-p_\text{n}),
\end{equation}
which is equivalent to $P_e^{\mathcal{G}}>1-\mathop\prod\limits_{n=1}^N (1-p_\text{n})$. Using the AM-GM inequality, we have that
\begin{equation}
\begin{split}
  1-P_{e}^{\mathcal{G}}<{{\left[ \frac{1}{N}\sum\limits_{n=1}^{N}{(1-{{p}_{n}})} \right]}^{N}}={{\left( 1-\frac{1}{N}\sum\limits_{n=1}^{N}{{{p}_{n}}} \right)}^{N}},
\end{split}
\end{equation}
Using the fact that $1-x\le\exp(-x)$, we have that
\begin{equation}
  \frac{1}{N}\sum\limits_{n=1}^{N}{{{p}_{n}}}<1-{{e}^{-\frac{1}{N}\ln \frac{1}{1-P_{e}^{\mathcal{G}}}}}<\frac{1}{N}\ln \frac{1}{1-P_{e}^{\mathcal{G}}}.
\end{equation}
Plugging in~\eqref{pn_l}, we get
\begin{equation}\label{cons_1}
  \sum\limits_{n=1}^{N}{{{\epsilon }^{{{t}_{n,1}}+{{d}_{n}}{{t}_{n,2}}}}}<\ln \frac{1}{1-P_{e}^{\mathcal{G}}}<\ln \frac{1}{1-p_\text{tar}},
\end{equation}
where $p_\text{tar}$ is the target error probability in Problem 1 and Problem 2. Note that to provide a lower bound for solutions of Problem 1 and Problem 2, we can always replace a constraint with a relaxed version. In the following proof, we always relax the constraint $P_{e}^{\mathcal{G}}\le {{p}_{\text{tar}}}$ by~\eqref{cons_1}, which only makes our lower bound loose, but still legitimate.

Consider Problem 1, in which we have a constraint on the sparseness $\sum\limits_{n=1}^{N}{{{d}_{n}}}\le D$, and a constraint on the error probability $p$. Our goal is to minimize $E=\sum\limits_{n=1}^{N}{{{E}_{1}}{{t}_{n,1}}+{{E}_{2}}{{t}_{n,2}}}$. Note that in this problem, we have the constraint that ${{t}_{n,1}},{{t}_{n,2}},{{d}_{n}}\in {{\mathbb{Z}}^{+}}\cup \{0\},\forall n$. We relax this constraint to ${{d}_{n}}\in [1,\infty ]\cup \{0\},{{t}_{n,1}},{{t}_{n,2}}\in [0,\infty ],\forall n$, which still yields a legitimate lower bound.

First, we notice the following facts:
\begin{itemize}
  \item If ${{d}_{n}}\le \frac{{{E}_{2}}}{{{E}_{1}}}$, we should set ${{t}_{n,2}}=0$. Otherwise, we should set ${{t}_{n,1}}=0$.
  \item If ${{d}_{n}}\le \frac{{{E}_{2}}}{{{E}_{1}}}$, we can always make the energy consumption $E$ smaller by setting ${{d}_{n}}=0$.
\end{itemize}
\begin{proof}
For the $n$-th node, if we keep ${{t}_{n}}:={{t}_{n,1}}+{{d}_{n}}{{t}_{n,2}}$ fixed, the LHS of the constraint~\eqref{cons_1} does not change. Noticing that the energy spent at the $n$-th node can be written as ${{E}_{1}}{{t}_{n,1}}+{{E}_{2}}{{t}_{n,2}}={{E}_{1}}{{t}_{n}}+\left( {{E}_{2}}-{{E}_{1}}{{d}_{n}} \right){{t}_{n,2}}$, we arrive at the conclusion that we should set ${{t}_{n,2}}=0$ when ${{d}_{n}}\le \frac{{{E}_{2}}}{{{E}_{1}}}$. Otherwise, we should maximize ${{t}_{n,2}}$, which means setting ${{t}_{n,1}}=0$. This concludes the first statement.

Based on the first statement, we have that, when ${{d}_{n}}\le \frac{{{E}_{2}}}{{{E}_{1}}}$, we set ${{t}_{n,2}}=0$. Therefore, the constraint~\eqref{cons_1} does not contain ${{d}_{n}}$ for ${{d}_{n}}\le \frac{{{E}_{2}}}{{{E}_{1}}}$ anymore, which means that further reducing ${{d}_{n}}$ does not affect the constraints. Thus, we should set ${{d}_{n}}=0$, which can help relax the constraints for other ${{d}_{n}}$.
\end{proof}

We assume, W.L.O.G., ${{d}_{1}}\ge {{d}_{2}}\ge \cdots {{d}_{N}}\ge 0$. Using the two arguments above, we can arrive at the following statement about the solution of the relaxed minimization Problem 1:\\
\textbf{Statement~\ref{PofThm2}.1} : there exists $m\in \{1,\dots,N\}$, s.t.\\
1. for $1\le n\le m$, ${{d}_{n}}\ge \max \left( \frac{{{E}_{2}}}{{{E}_{1}}},1 \right)$, ${{t}_{n,1}}=0$;\\
2. for $m+1\le n\le N$, ${{d}_{n}}=0$, ${{t}_{n,2}}=0$.

Since ${{d}_{n}}\ge \max \left( \frac{{{E}_{2}}}{{{E}_{1}}},1 \right)$, we know that $m\max \left( \frac{{{E}_{2}}}{{{E}_{1}}},1 \right)\le D$. We can then rewrite the original optimization problem as follows:
\begin{equation}\label{min1}
  \begin{split}
    & \min_{\{t_{n,1},t_{n,2},d_n\}_{n=1}^N,\delta_1,\delta_2} E=\sum\limits_{n=1}^{m}{{{E}_{2}}{{t}_{n,2}}}\text{+}\sum\limits_{n=m+1}^{N}{{{E}_{1}}{{t}_{n,1}}}, \\
 & \text{s.t.}\left\{ \begin{matrix}
   \sum\limits_{n=1}^{m}{{{\epsilon }^{{{d}_{n}}{{t}_{n,2}}}}}\le {{\delta }_{2}},\sum\limits_{n=m+1}^{N}{{{\epsilon }^{{{t}_{n,1}}}}}\le {{\delta }_{1}},\sum\limits_{n=1}^{m}{{{d}_{n}}}\le D,  \\
   {{\delta }_{1}}+{{\delta }_{2}}<\ln \frac{1}{1-p_\text{tar}},{{\delta }_{1}},{{\delta }_{2}}\ge 0.  \\
   {{d}_{n}}\in [1,\infty ]\cup \{0\},{{t}_{n,1}},{{t}_{n,2}}\in [0,\infty ],\forall n
\end{matrix} \right. \\
  \end{split}
\end{equation}
When $m,{{\delta }_{1}}$ and ${{\delta }_{2}}$ are fixed, we decompose the problem into two sub-problems:
\begin{equation}
  \begin{split}
     \min \sum\limits_{n=1}^{m}{{{t}_{n,2}}},
  \text{  s.t. }\sum\limits_{n=1}^{m}{{{d}_{n}}}\le D,\sum\limits_{n=1}^{m}{{{\epsilon }^{{{d}_{n}}{{t}_{n,2}}}}}\le {{\delta }_{2}},{{\delta }_{2}}\ge 0.
  \end{split}
\end{equation}
\begin{equation}
  \begin{split}
  & \min \sum\limits_{n=m+1}^{N}{{{t}_{n,1}}}
 \text{  s.t. }\sum\limits_{n=m+1}^{N}{{{\epsilon }^{{{t}_{n,1}}}}}\le {{\delta }_{1}},{{\delta }_{1}}\ge 0.
\end{split}
\end{equation}
According to Lemma \ref{Theo2_lemma}, the first sub-problem, if $\frac{{{m}^{2}}}{\delta D}>{{e}^{1.5}}$, satisfies the lower bound
\begin{equation}
  \sum\limits_{n=1}^{m}{{{t}_{n,2}}}\ge \frac{{{m}^{2}}}{D\ln \left( 1/\epsilon  \right)}\ln \frac{m}{{{\delta }_{2}}}\ge \frac{{{m}^{2}}}{D\ln \left( 1/\epsilon  \right)}\ln \frac{m}{{{\delta }}},
\end{equation}
where
\begin{equation}\label{delt}
  \delta \text{=}\ln \frac{1}{1-p_\text{tar}}.
\end{equation}
The second sub-problem can be solved using simple convex-optimization techniques and the optimal solution satisfies
\begin{equation}
  \sum\limits_{n=m+1}^{N}{{{t}_{n,1}}}\ge \frac{N-m}{\ln \left( 1/\epsilon  \right)}\ln \frac{N-m}{{{\delta }_{1}}}\ge\frac{N-m}{\ln \left( 1/\epsilon  \right)}\ln \frac{N-m}{{{\delta }}}.
\end{equation}
Therefore, when $m$ is fixed,
\begin{equation}
  \begin{split}
    E=&\sum\limits_{n=1}^{N}{{{E}_{1}}{{t}_{n,1}}+{{E}_{2}}{{t}_{n,2}}}\\
    \ge &\frac{2{{m}^{2}}{{E}_{2}}}{D\ln \left( 1/\epsilon  \right)}\ln \frac{m}{\delta }\text{+}\frac{\left( N-m \right){{E}_{1}}}{\ln \left( 1/\epsilon  \right)}\ln \frac{N-m}{\delta },
  \end{split}
\end{equation}
If we choose $m\ge\frac{N}{2}$, and since $\frac{{{N}^{2}}}{4\delta D}>{{e}^{1.5}}$, we have that
\[E\ge \frac{{{N}^{2}}{{E}_{2}}}{4D\ln \left( 1/\epsilon  \right)}\ln \frac{N}{2\delta }=\Theta \left( \frac{{{N}^{2}}{{E}_{2}}}{D}\ln \frac{N}{\delta } \right).\]
If we choose $m<\frac{N}{2}$, we have that
\[E\ge \frac{N{{E}_{1}}}{2\ln \left( 1/\epsilon  \right)}\ln \frac{N}{2\delta }=\Theta \left( N{{E}_{1}}\ln \frac{N}{\delta } \right).\]
In the limit of small $p_\text{tar}$, $p_\text{tar}\approx \ln \frac{1}{1-p_\text{tar}}$. Thus, the minimization problem~\eqref{min1} always satisfies
\begin{equation}\label{p1lb_1}
  E=\Omega \left( \min \left( N{{E}_{1}}\ln \frac{N}{p_\text{tar}},\frac{{{N}^{2}}{{E}_{2}}}{D}\ln \frac{N}{p_\text{tar}} \right) \right).
\end{equation}
Moreover, the number of transmissions from distributed agents to the sink should be at least in the order of $N$, because there are $N$ bits to be transmitted over the binary erasure channels from the distributed agents to the sink. Therefore, $E\ge \Theta(NE_1)$, which, together with~\eqref{p1lb_1}, concludes that \eqref{p1lb} holds.

The lower bound of Problem 2 can be obtained similarly by relaxing Problem 2 to the following problem:
\begin{equation}
  \begin{split}
  & \min \sum\limits_{n=1}^{m}{{{d}_{n}}} \\
 & s.t.\left\{ \begin{matrix}
   \sum\limits_{n=1}^{m}{{{\epsilon }^{{{d}_{n}}{{t}_{n,2}}}}}\le {{\delta }_{2}},\sum\limits_{n=m+1}^{N}{{{\epsilon }^{{{t}_{n,1}}}}}\le {{\delta }_{1}},{{E}_{1}}{{T}_{1}}+{{E}_{2}}{{T}_{2}}\le {{E}_{M}},  \\
   \begin{matrix}
   {{\delta }_{1}}+{{\delta }_{2}}<\ln \frac{1}{1-{{p}_{\text{tar}}}},{{\delta }_{1}},{{\delta }_{2}}\ge 0,  \\
   \sum\limits_{n=1}^{m}{{{t}_{n,2}}}\le {{T}_{2}},\sum\limits_{n=m+1}^{N}{{{t}_{n,1}}}\le {{T}_{1}}  \\
\end{matrix}  \\
   {{d}_{n}}\in [0,\infty ],{{t}_{n,1}},{{t}_{n,2}}\in [1,\infty ]\cup \{0\},\forall n.  \\
\end{matrix} \right. \\
  \end{split}
\end{equation}
When $\delta_1,\delta_2$, $T_1,T_2$ and $m$ are fixed, the above problem can be decomposed into two sub-problems.
\begin{equation}\label{p66}
  \begin{split}
     \min \sum\limits_{n=1}^{m}{d_n},
  \text{  s.t. }\sum\limits_{n=1}^{m}{{t_{n,2}}}\le T_2,\sum\limits_{n=1}^{m}{{{\epsilon }^{{{d}_{n}}{{t}_{n,2}}}}}\le {{\delta }_{2}},{{\delta }_{2}}\ge 0.
  \end{split}
\end{equation}
\begin{equation}
  \begin{split}
  & \min 0,
 \text{  s.t. }\sum\limits_{n=m+1}^{N}{{t_{n,1}}}\le T_1,\sum\limits_{n=m+1}^{N}{{{\epsilon }^{{{t}_{n,1}}}}}\le {{\delta }_{1}},{{\delta }_{1}}\ge 0.
\end{split}
\end{equation}
Notice that the second sub-problem only tries to search for a feasible solution. Using convex programming techniques, we have that, when $\sum\limits_{n=m+1}^{N}{{{\epsilon }^{{{t}_{n,1}}}}}\le {{\delta }_{1}}$,
$\sum\limits_{n=m+1}^{N}{{t_{n,1}}}\ge \frac{N-m}{\ln (1/\epsilon)}\ln\frac{N-m}{\delta_1}$. Therefore, if $m<\frac{N}{2}$,
\[E_M>E_1T_1\ge\frac{N-m}{\ln (1/\epsilon)}\ln\frac{N-m}{\delta_1}>\frac{N}{2\ln (1/\epsilon)}\ln\frac{N}{2\delta},\]
which contradicts the condition $E_M< \frac{N{{E}_{1}}}{2\ln \left( 1/\epsilon  \right)}\ln \frac{N}{2\delta }$ in Problem 2. When $m\ge N/2$, it holds that $\frac{m^2}{\delta_2T_2}\ge\frac{N^2E_2}{4\delta E_M}>e^{1.5}$. Therefore, using Lemma~\ref{PofThm2}, we can also solve problem~\eqref{p66}. Skipping the details, we can show that the resulted optimization problem can be written as
\begin{equation}
\begin{split}
  & \min_{m\in\{N/2,\dots,N\},\delta_1,\delta_2,T_1,T_2} \frac{{{m}^{2}}}{{{T}_{2}}\ln \left( 1/\varepsilon  \right)}\ln \frac{m}{{{\delta }_{2}}}, \\
 & \text{s}\text{.t}\text{.}\left\{ \begin{matrix}
   \frac{N-m}{\ln \left( 1/\epsilon  \right)}\ln \frac{N-m}{{{\delta }_{1}}}\le {{T}_{1}},{{E}_{1}}{{T}_{1}}+{{E}_{2}}{{T}_{2}}\le {{E}_{M}},  \\
   {{\delta }_{1}}+{{\delta }_{2}}\le \ln \frac{1}{1-p_\text{tar}},{{\delta }_{1}},{{\delta }_{2}}\ge 0.
\end{matrix} \right.
\end{split}
\end{equation}
Noticing that ${{T}_{2}}\le {{E}_{M}}/{{E}_{2}}$, and hence that $\frac{{{m}^{2}}}{{{T}_{2}}\ln \left( 1/\varepsilon  \right)}\ln \frac{m}{{{\delta }_{2}}}\ge \frac{{{m}^{2}}E_2}{{{E}_{M}}\ln \left( 1/\varepsilon  \right)}\ln \frac{m}{\delta }$, where $\delta=\ln \frac{1}{1-p_\text{tar}} $, the solution of the above problem can be further lower-bounded by the solution of
\begin{equation}
\begin{split}
  & \min_{m\in\{N/2,\dots,N\},\delta_1,\delta_2,T_1,T_2} \frac{{{m}^{2}}E_2}{{{E}_{M}}\ln \left( 1/\varepsilon  \right)}\ln \frac{m}{\delta }, \\
 & \text{s}\text{.t}\text{.}\left\{ \begin{matrix}
   \frac{N-m}{\ln \left( 1/\epsilon  \right)}\ln \frac{N-m}{{{\delta }_{1}}}\le {{T}_{1}},{{E}_{1}}{{T}_{1}}+{{E}_{2}}{{T}_{2}}\le {{E}_{M}},  \\
   {{\delta }_{1}}+{{\delta }_{2}}\le \ln \frac{1}{1-p_\text{tar}},{{\delta }_{1}},{{\delta }_{2}}\ge 0,
\end{matrix} \right.
\end{split}
\end{equation}
which is equivalent to
\begin{equation}\label{p70}
\begin{split}
  & \min_{m\in\{N/2,\dots,N\}} \frac{{{m}^{2}}{{E}_{2}}}{\ln \left( 1/\varepsilon  \right){{E}_{M}}}\ln \frac{m}{\delta }, \\
 & \text{s.t.  }{{E}_{M}}\ge {{E}_{1}}\frac{N-m}{\ln \left( 1/\epsilon  \right)}\ln \frac{N-m}{\delta }.
\end{split}
\end{equation}
Since $m\ge N/2$, we know that the solution above satisfies $E^*\ge \frac{{{N}^{2}}{{E}_{2}}}{4\ln \left( 1/\varepsilon  \right){{E}_{M}}}\ln \frac{N}{2\delta }$.
\begin{remark}\label{remark2}
Notice that, if $E_M$ is very small, \emph{e.g.}, $E_M\to 0$, we have to set $m=N$ in~\eqref{p70}. The obtained lower bound has the form $|\mathcal{E}|\ge \frac{N^2E_2}{\ln E_M}\ln\frac{N}{\delta}\to \infty$ for a fixed $N$. This does not suggest that the lower bound is wrong, because in this case, the set of feasible solution in Problem 2 is empty. Therefore, the true minimization value of Problem 2 should be $\infty$, which means that the lower bound is still legitimate.
\end{remark}

\section{Proof of Lemma~\ref{identical_error}}\label{pf_Lemma2}
From Section~\ref{Algorithm}, we know that an error occurs when there exist more than one feasible solutions that satisfy the version with possible erasures of~\eqref{3g_code}. That is to say, when all positions with erasures are eliminated from the received vector, there are at least two solutions to the remaining linear equations. Denote by $\mathbf{x}_1$ and $\mathbf{x}_2$ two different vectors of self-information bits. We say that \emph{$\mathbf{x}_1$ is confused with $\mathbf{x}_2$} if the true vector of self-information bits is $\mathbf{x}_1$ but $\mathbf{x}_2$ also satisfies the possibly erased version of~\eqref{3g_code}, in which case $x_{1}$ is indistinguishable from $\mathbf{x}_2$. Denote by $P_e^{\mathcal{G}}(\mathbf{x}_1\rightarrow \mathbf{x}_2)$ the probability that $\mathbf{x}_1$ is confused with $\mathbf{x}_2$.

\begin{lemma}\label{all_zero_vector}
The probability that $\mathbf{x}_1$ is confused with $\mathbf{x}_2$ equals the probability that $\mathbf{x}_1-\mathbf{x}_2$ is confused with the $N$-dimensional zero vector $\mathbf{0}_N$, \emph{i.e.},
\begin{equation}\label{confusion_identical}
  P_e^{\mathcal{G}}(\mathbf{x}_1\rightarrow \mathbf{x}_2)=P_e^{\mathcal{G}}(\mathbf{x}_1-\mathbf{x}_2\rightarrow \mathbf{0}_N).
\end{equation}
\end{lemma}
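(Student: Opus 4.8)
The plan is to reduce both sides of \eqref{confusion_identical} to the same channel event, exploiting the linearity of the code over $\mathrm{GF}(2)$ together with the fact that the erasure pattern produced by the two-step scheme is statistically independent of the transmitted self-information bits. Write $\mathbf{c}(\mathbf{x})=\mathbf{x}^\top[\mathbf{I},\mathbf{A}]$ for the length-$2N$ codeword associated with input $\mathbf{x}$. By definition, $\mathbf{x}_1$ is confused with $\mathbf{x}_2$ precisely when, on every coordinate of the received vector that is \emph{not} erased, the received symbol (which equals the corresponding coordinate of $\mathbf{c}(\mathbf{x}_1)$) also matches that coordinate of $\mathbf{c}(\mathbf{x}_2)$. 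Equivalently, confusion occurs if and only if $\mathbf{c}(\mathbf{x}_1)$ and $\mathbf{c}(\mathbf{x}_2)$ agree on all non-erased coordinates, i.e. every coordinate in the support of $\mathbf{c}(\mathbf{x}_1)-\mathbf{c}(\mathbf{x}_2)$ is erased.

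First I would invoke linearity: over $\mathrm{GF}(2)$ we have $\mathbf{c}(\mathbf{x}_1)-\mathbf{c}(\mathbf{x}_2)=(\mathbf{x}_1-\mathbf{x}_2)^\top[\mathbf{I},\mathbf{A}]=\mathbf{c}(\mathbf{x}_1-\mathbf{x}_2)$, so the set of differing coordinates is exactly the support $\mathcal{S}$ of $\mathbf{c}(\mathbf{x}_1-\mathbf{x}_2)$, a set that depends only on $\mathbf{x}_1-\mathbf{x}_2$ and on $\mathcal{G}$. Applying the same reasoning to the right-hand side, $\mathbf{x}_1-\mathbf{x}_2$ is confused with $\mathbf{0}_N$ exactly when every coordinate in the support of $\mathbf{c}(\mathbf{x}_1-\mathbf{x}_2)-\mathbf{c}(\mathbf{0}_N)=\mathbf{c}(\mathbf{x}_1-\mathbf{x}_2)$ is erased, which is the same set $\mathcal{S}$. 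Thus both confusion events are literally the event $\{\text{all coordinates in }\mathcal{S}\text{ are erased}\}$, with $\mathcal{S}$ a fixed coordinate set in both cases.

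The crux is then to show the probability of this erasure event does not depend on which input was actually transmitted: on the left we condition on input $\mathbf{x}_1$, on the right on input $\mathbf{x}_1-\mathbf{x}_2$. The key observation I would establish is that the erasure pattern at the sink (which of the $2N$ coordinates carry the value `$e$') is a deterministic function of the underlying BEC erasure realizations alone and is independent of the bit values being sent. Concretely: in the first step, whether a node $v_n$ can compute $y_n$ depends only on whether each in-neighbor's bit survives at least one of its $t$ broadcasts, which is a channel event unaffected by the bit values; in the second step, whether $x_n$ or $y_n$ reaches the sink is again a value-independent BEC erasure. Hence the distribution of the erasure set is identical under any input, so $\Pr[\mathcal{S}\text{ all erased}]$ is the same under both conditionings, giving \eqref{confusion_identical}.

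The main obstacle I anticipate is making the independence-from-input claim airtight, since the local-parity coordinate can be set to `$e$' through two distinct mechanisms — first-step failure to gather all in-neighbor bits, or second-step erasure of $y_n$ — and one must confirm that \emph{neither} mechanism's probability depends on the actual parity value or on the individual $x_m$'s. Once this value-independence of the erasure pattern is pinned down, the equality \eqref{confusion_identical} follows immediately because both sides reduce to the probability that the single fixed coordinate set $\mathcal{S}$ is entirely erased.
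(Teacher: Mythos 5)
Your proposal is correct and follows essentially the same route as the paper: both arguments use linearity of $\mathbf{x}\mapsto\mathbf{x}^\top[\mathbf{I},\mathbf{A}]$ to show that, for any fixed erasure pattern, the event ``$\mathbf{x}_1$ confused with $\mathbf{x}_2$'' coincides with the event ``$\mathbf{x}_1-\mathbf{x}_2$ confused with $\mathbf{0}_N$,'' and then observe that the distribution of the erasure pattern (in both steps of the scheme) is independent of the transmitted bit values by the BEC assumption. Your explicit discussion of the two erasure mechanisms for the parity coordinates is a slightly more detailed justification of the same input-independence claim the paper asserts in one line.
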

\begin{proof}
We define an \emph{erasure matrix} $\mathbf{E}$ as a $2N$-by-$2N$ diagonal matrix in which each diagonal entry is either an `$e$' or a $1$. Define an extended binary multiplication operation with `$e$', which has the rule that $ae=e,a\in \{0,1\}$. The intuition is that both $0$ and $1$ become an erasure after being erased. Under this definition, the event that $\mathbf{x}_1$ is confused with $\mathbf{x}_2$ can be written as
\begin{equation}
  \mathbf{x}_1^\top\cdot [\mathbf{I},\mathbf{A}] \cdot \mathbf{E}=\mathbf{x}_2^\top\cdot [\mathbf{I},\mathbf{A}] \cdot \mathbf{E},
\end{equation}
\textcolor{black}{where a diagonal entry in $\mathbf{E}$ being `$e$' corresponds to erasure/removal of the corresponding linear equation.} We know that if the erasure matrix $\mathbf{E}$ remains the same, we can arrange the two terms and write
\begin{equation}
  (\mathbf{x}_1^\top-\mathbf{x}_2^\top)\cdot [\mathbf{I},\mathbf{A}] \cdot \mathbf{E}=0_N^\top\cdot [\mathbf{I},\mathbf{A}] \cdot \mathbf{E}.
\end{equation}
That is to say, if $\mathbf{x}_1$ is confused with $\mathbf{x}_2$, then, if all the erasure events are the same and the self-information bits are changed to $\mathbf{x}_1-\mathbf{x}_2$, they will be confused with the all zero vector $\mathbf{0}_N$ and vice-versa. Thus, in order to prove~\eqref{confusion_identical}, we only need to show that the probability of having particular erasure events remains the same with different self-information bits. This claim is satisfied, because by the BEC assumption the erasure events are independent of the channel inputs and identically distributed.
\end{proof}
Using the union bound, we have that
\begin{equation}
\begin{split}
  P_e^{\mathcal{G}}(\mathbf{x})\le \mathop \sum \limits_{\mathbf{x}_1^\top \in \{0,1\}^N \setminus \{\mathbf{x}\}} P_e^{\mathcal{G}}(\mathbf{x}\rightarrow \mathbf{x}_1).
\end{split}
\end{equation}
Thus, using the result from Lemma~\ref{all_zero_vector}, we obtain
\begin{equation}
\begin{split}
P_e^{\mathcal{G}}(\mathbf{x})\le \mathop \sum \limits_{\mathbf{x}_1^\top \in \{0,1\}^N \setminus \{\mathbf{x}\}} P_e^{\mathcal{G}}(\mathbf{x}-\mathbf{x}_1\rightarrow \mathbf{0}_N),
\end{split}
\end{equation}
which is equivalent to~\eqref{identical_error_equation}.
\section{Proof of Lemma~\ref{Lemma1}}\label{PofL1}
First, we notice that for $1\le i\le N$, the vector $\tilde{\mathbf{x}}^\top$ received is the noisy version of $\mathbf{x}_0^\top$. Since, according to the in-network computing scheme in Section~\ref{Algorithm}, the vector $\tilde{\mathbf{x}}^\top$ is obtained in the second step, the event $A_3^{(i)}(\mathbf{x}_0^\top)$ is the only ambiguity event. Moreover, if the $i$-th entry of $\mathbf{x}_0^\top$ is zero, it does not matter whether an erasure happens to this entry. Thus, the error probability can be calculated by considering all the $k$ non-zero entries, which means
\[
\mathop\prod \limits_{i=1}^{N}\Pr[A_1^{(i)}(\mathbf{x}_0^\top)\cup A_2^{(i)}(\mathbf{x}_0^\top)\cup A_3^{(i)}(\mathbf{x}_0^\top)]= \epsilon^k.
\]
For $N+1\le i\le 2N$, $A_3^{(i)}(\mathbf{x}_0^\top)$ is the erasure event during the second step and is independent from the previous two events $A_1^{(i)}(\mathbf{x}_0^\top)$ and $A_2^{(i)}(\mathbf{x}_0^\top)$. Therefore
\begin{equation}\label{Derive_1}
  \begin{split}
&\Pr\left[A_1^{(i)}(\mathbf{x}_{0}^{\top})\cup A_2^{(i)}(\mathbf{x}_{0}^{\top})\cup A_3^{(i)}(\mathbf{x}_{0}^{\top})\right]\\
\le& \Pr\left[(A_3^{(i)}(\mathbf{x}_{0}^{\top}))^C\right]+\Pr\left[A_3^{(i)}(\mathbf{x}_{0}^{\top})\right]\Pr\left[A_1^{(i)}(\mathbf{x}_{0}^{\top})\cup A_2^{(i)}(\mathbf{x}_{0}^{\top})\right]\\
=& 1-\epsilon+\epsilon\Pr\left[A_1^{(i)}(\mathbf{x}_{0}^{\top})\cup A_2^{(i)}(\mathbf{x}_{0}^{\top})\right]\\
=& 1-\epsilon+\epsilon\left(\Pr\left[A_1^{(i)}(\mathbf{x}_{0}^{\top})\right]+\Pr\left[(A_1^{(i)}(\mathbf{x}_{0}^{\top}))^C\cap A_2^{(i)}(\mathbf{x}_{0}^{\top})\right]\right).
\end{split}
\end{equation}
The event $A_1^{(i)}(\mathbf{x}_0^\top)$ happens when the local parity $\mathbf{x}_0^\top \mathbf{a}_{i}$ equals zero, \emph{i.e.}, in the $k$ locations of non-zero entries in $\mathbf{x}_0^\top$, there are an even number of ones in the corresponding entries in $\mathbf{a}_{i}$, the $i$-th column of the graph adjacency matrix $\mathbf{A}$. Denote by $l$ the number of ones in these $k$ corresponding entries in $\mathbf{a}_{i}$. Since each entry of $\mathbf{a}_{i}$ takes value 1 independently with probability $p$, the probability that an even number of entries are 1 in these $k$ locations is
\begin{equation}\label{Derive_2}
  \begin{split}
\Pr[A_1^{(i)}(\mathbf{x}_{0}^{\top})]=&\Pr[l\text{ is even}]\\
=&\mathop\sum\limits_{l\text{ is even}}p^l(1-p)^{k-l}=\frac{1+(1-2p)^k}{2}.
\end{split}
\end{equation}
The event $(A_1^{(i)}(\mathbf{x}_{0}^{\top}))^C\cap A_2^{(i)}(\mathbf{x}_{0}^{\top})$ indicates that $l$ is odd and at least one entry of all non-zero entries in $\mathbf{x}_0^\top$ is erased. Suppose in the remaining $N-k$ entries in $\mathbf{a}_{i}$, $j$ entries take the value 1 and hence there are $(l+j)$ 1's in $\mathbf{a}_{i}$. Therefore, for a fixed $l$, we have
\[\begin{split}
&\Pr[(A_1^{(i)}(\mathbf{x}_{0}^{\top}))^C\cap A_2^{(i)}(\mathbf{x}_{0}^{\top})|l]\\
=&\mathop\sum\limits_{j=0}^{N-k}\binom{N-k}{j}p^j (1-p)^{N-k-j}\cdot[1-(1-p_e)^{l+j}]\\
\le & \mathop\sum\limits_{j=0}^{N-k}\binom{N-k}{j}p^j (1-p)^{N-k-j}(l+j)p_e,
\end{split}\]
where $p$ is the edge connection probability and $p_e$ is the probability that a certain bit in $\mathbf{x}_0$ is erased for $t=\frac{\log(\frac{c\log N}{p_{\text{ch}}})}{\log(1/\epsilon)}$ times when transmitted to $v_i$ from one of its neighbors during the first step of the in-network computing scheme. Combining the above inequality with \eqref{one_bit_error_equation}, we get
\[\begin{split}
&\Pr[(A_1^{(i)})^C\cap A_2^{(i)}(l)]\\
\le& \mathop\sum\limits_{j=0}^{N-k}\binom{N-k}{j}p^j (1-p)^{N-k-j}(l+j)\frac{p_{\text{ch}}}{c\log N}\\
=&l\frac{p_{\text{ch}}}{c\log N}\mathop\sum\limits_{j=0}^{N-k}\binom{N-k}{j}p^j (1-p)^{N-k-j}\\
  &+\frac{p_{\text{ch}}}{c\log N}\mathop\sum\limits_{j=1}^{N-k}j\binom{N-k}{j}p^j (1-p)^{N-k-j}\\
\overset{(a)}{=}&l\frac{p_{\text{ch}}}{c\log N}\\
&+\frac{p_{\text{ch}}p}{c\log N}\mathop\sum\limits_{j=1}^{N-k}(N-k)\binom{N-k-1}{j-1}p^{j-1} (1-p)^{N-k-j}\\
=&l\frac{p_{\text{ch}}}{c\log N}+\frac{p_{\text{ch}}(N-k)}{N}\mathop\sum\limits_{j=1}^{N-k}\binom{N-k-1}{j-1}p^{j-1} (1-p)^{N-k-j}\\
=&l\frac{p_{\text{ch}}}{c\log N}+p_{\text{ch}}\cdot\frac{N-k}{N},
\end{split}\]
where step (a) follows from $j\binom{N-k}{j}=(N-k)\binom{N-k-1}{j-1}$. Therefore
\[\begin{split}
&\Pr[(A_1^{(i)})^C\cap A_2^{(i)}]\\
=&\mathop\sum\limits_{l\text{ is odd}}\binom{k}{l}p^l (1-p)^{k-l}\Pr[(A_1^{(i)})^C\cap A_2^{(i)}(l)]\\
\le&\mathop\sum\limits_{l\text{ is odd}}\binom{k}{l}p^l (1-p)^{k-l} (l\frac{p_{\text{ch}}}{c\log N}+p_{\text{ch}}\cdot\frac{N-k}{N})\\
=&\sum\limits_{l\text{ is odd}}{\binom{k}{l}{{p}^{l}}{{(1-p)}^{k-l}}{{p}_{\text{ch}}}\cdot \frac{N-k}{N}}\\
&+\sum\limits_{l\text{ is odd}}{l\binom{k}{l}{{p}^{l}}{{(1-p)}^{k-l}}\frac{{{p}_{\text{ch}}}}{c\log N}}\\
=&{{p}_{\text{ch}}}\cdot \frac{N-k}{N}\sum\limits_{l\text{ is odd}}{\binom{k}{l}}{{p}^{l}}{{(1-p)}^{k-l}}\\
&+\frac{kp{{p}_{\text{ch}}}}{c\log N}\sum\limits_{l\text{ is odd}}{\binom{k-1}{l-1}}{{p}^{l-1}}{{(1-p)}^{k-l}}\\
=&p_{\text{ch}}\cdot\frac{N-k}{N}\frac{1-(1-2p)^k}{2}+p_{\text{ch}}\cdot\frac{k}{N}\frac{1+(1-2p)^{k-1}}{2}\\
\overset{(a)}{\le}& Lp_{\text{ch}}\frac{1-(1-2p)^k}{2},
\end{split}\]
where the constant $L$ in step (a) is to be determined. Now we show that $L=\frac{2}{1-1/e}+1$ suffices to ensure that (a) holds. In fact, we only need to prove
\[\frac{N-k}{N}\frac{1-{{(1-2p)}^{k}}}{2}+\frac{k}{N}\frac{1+{{(1-2p)}^{k-1}}}{2}\le L\frac{1-{{(1-2p)}^{k}}}{2}.\]
Since $\frac{N-k}{N}<1$, it suffices to show that
\[\frac{k}{N}\frac{1+{{(1-2p)}^{k-1}}}{2}\le \left( L-1 \right)\frac{1-{{(1-2p)}^{k}}}{2}.\]
Since ${{(1-2p)}^{k-1}}<1$, it suffices to show that
\[\frac{k}{N}\le \left( L-1 \right)\frac{1-{{(1-2p)}^{k}}}{2},\]
or equivalently,
\begin{equation}\label{kkey}
  \frac{2k}{1-{{(1-2p)}^{k}}}\le N\left( L-1 \right).
\end{equation}
We know that
\[\begin{split}&1-{{(1-2p)}^{k}}\ge 2kp-C_{k}^{2}{{\left( 2p \right)}^{2}}\\
=&2kp-2k(k-1){{p}^{2}}=2kp\left[ 1-p(k-1) \right]\ge 2kp(1-kp).\end{split}\]
Thus, when $kp\le \frac{1}{2}$, $1-{{(1-2p)}^{k}}\ge 2kp(1-kp)\ge kp$ and
\[\frac{2k}{1-{{(1-2p)}^{k}}}\le \frac{2k}{kp}=\frac{2N}{c\log N}\le 2N,\] when $c\log N>1$.
When $kp>\frac{1}{2}$, ${{(1-2p)}^{k}}\le {{(1-2p)}^{\frac{1}{2p}}}\le \frac{1}{e}$ and
\[\frac{2k}{1-{{(1-2p)}^{k}}}\le \frac{2k}{1-1/e}\le \frac{2N}{1-1/e}.\]
Thus, as long as $L\ge 1+\frac{2}{1-1/e}$, \eqref{kkey} holds.
Jointly considering~\eqref{Derive_2}, we get
\[\Pr[A_1^{(i)}\cup A_2^{(i)}]\le\frac{1+(1-2p)^k}{2}+Lp_{\text{ch}}\frac{1-(1-2p)^k}{2}.\]
Combining~\eqref{Derive_1}, we finally arrive at
\[\begin{split}
&\Pr[A_1^{(i)}\cup A_2^{(i)}\cup A_3^{(i)}]\\
\le& \epsilon +(1-\epsilon )\left[ \frac{1+{{(1-2p)}^{k}}}{2}+L{{p}_{\text{ch}}}\frac{1-{{(1-2p)}^{k}}}{2} \right] \\
 & =\epsilon +(1-\epsilon )\left[ 1-\left( 1-L{{p}_{\text{ch}}} \right)\frac{1-{{(1-2p)}^{k}}}{2} \right] \\
 & =1-(1-\epsilon )\left( 1-L{{p}_{\text{ch}}} \right)\frac{1-{{(1-2p)}^{k}}}{2} \\
 & <1-(1-\epsilon -L{{p}_{\text{ch}}})\frac{1-{{(1-2p)}^{k}}}{2} \\
 & =1-(1-\epsilon -L{{p}_{\text{ch}}})\left[ 1-\frac{1+{{(1-2p)}^{k}}}{2} \right] \\
 & =\epsilon +L{{p}_{\text{ch}}}+(1-\epsilon -L{{p}_{\text{ch}}})\frac{1+{{(1-2p)}^{k}}}{2} \\
 & ={{\varepsilon }_{0}}+(1-{{\varepsilon }_{0}})\frac{1+{{(1-2p)}^{k}}}{2},
\end{split}\]
where $\varepsilon_0=Lp_{\text{ch}}+\epsilon$.
\section{Proof of Theorem~\ref{Theorem1}}\label{PofT1}
We will prove that for any $\eta>0$, it holds that
\begin{equation}\label{error_exponent_proof}
  P_e^{(N)} \le {(1 - b_{\eta})^N}{\rm{ + }}\eta e \epsilon \frac{{{N^{2 - c(1 - {\varepsilon _0})(1 - c\eta )}}}}{{\log N}}.
\end{equation}
As shown in what follows, we bound the right hand side of~\eqref{Error_Middle} with two different methods for different $k$'s. First, when $k$ satisfies
\begin{equation}\label{small_k}
  1\le k < \eta \frac{N}{{\log N}},
\end{equation}
define
\begin{equation}\label{u}
  u = N(1 - {\varepsilon _0})\frac{{1 - {{(1 - 2p)}^k}}}{2}
\end{equation}
Then, based on the inequality
\begin{equation}\label{exponential_ineq}
  {(1 - \frac{1}{x})^x} \le {e^{ - 1}},\forall x \in (0,1],
\end{equation}
we have
\begin{equation}\label{k_small_drv1}
\begin{split}
&{[{\varepsilon _0} + (1 - {\varepsilon _0})\frac{{1 + {{(1 - 2p)}^k}}}{2}]^N} \\
= &{(1 - \frac{u}{N})^N} = {[{(1 - \frac{u}{N})^{\frac{N}{u}}}]^u} \le {e^{ - u}}.
\end{split}
\end{equation}
From the Taylor's expansion, we get
\[{(1 - 2p)^k} = 1 - 2pk + \frac{{k(k - 1)}}{2}{\theta ^2},\theta  \in [0,2p].\]
By applying the equation above to~\eqref{u}, we get
\[u = N(1 - {\varepsilon _0})[kp - \frac{{k(k - 1)}}{4}{\theta ^2}].\]
Therefore, we have
\[\begin{split}
{e^{ - u}} = &{e^{ - k(1 - {\varepsilon _0}) \cdot c\log N}}\exp \{ N(1 - {\varepsilon _0})\frac{{k(k - 1)}}{4}{\theta ^2}\}\\
\le &{\left( {\frac{1}{N}} \right)^{ck(1 - {\varepsilon _0})}}\exp \{ N(1 - {\varepsilon _0})\frac{{k(k - 1)}}{4}\frac{{4{c^2}{{\log }^2}N}}{{{N^2}}}\}\\
 = &{\left( {\frac{1}{N}} \right)^{ck(1 - {\varepsilon _0})}}{N^{(1 - {\varepsilon _0}) \cdot \frac{{{c^2}k(k - 1)\log N}}{N}}}.
\end{split}\]
Plugging the above inequality into~\eqref{k_small_drv1}, we get
\begin{equation}
\begin{split}\label{k_small_drv2}
&\binom{N}{k}{\epsilon^k}{[{\varepsilon _0} + (1 - {\varepsilon _0})\frac{{1 + {{(1 - 2p)}^k}}}{2}]^N}\\
\le& {\left( {\frac{{Ne}}{k}} \right)^k}{\epsilon^k}{\left( {\frac{1}{N}} \right)^{ck(1 - {\varepsilon _0})}}{N^{(1 - {\varepsilon _0}) \cdot \frac{{{c^2}k(k - 1)\log N}}{N}}}\\
= &{\left( {\frac{e}{k}\epsilon{N^{1 - c(1 - {\varepsilon _0})[1 - \frac{{c(k - 1)\log N}}{N}]}}} \right)^k}<{\left( {\frac{e}{k}\epsilon{N^{1 - c(1 - {\varepsilon _0})(1 - c\eta )}}} \right)^k},
\end{split}
\end{equation}
where the last inequality follows from~\eqref{small_k}.

Second, when $k$ satisfies
\begin{equation}\label{big_k}
  k > \eta \frac{N}{{\log N}},
\end{equation}
we can directly write
\[{(1 - 2p)^k} = {[{(1 - 2p)^{\frac{1}{{2p}}}}]^{2pk}} \le {e^{ - 2pk}} <{e^{ - 2c\eta }}.\]
Therefore, it holds that
\[\begin{split}
&\sum\limits_{k > \eta \frac{N}{{\log N}}} {\binom{N}{k}{{\epsilon}^k}{{[{\varepsilon _0} + (1 - {\varepsilon _0})\frac{{1 + {{(1 - 2p)}^k}}}{2}]^N}}}\\
\le &\sum\limits_{k > \eta \frac{N}{{\log N}}} {\binom{N}{k}{{\epsilon}^k}{{[{\varepsilon _0} + (1 - {\varepsilon _0})\frac{{1 + {e^{ - 2c\eta }}}}{2}]^N}}}\\
\le &{[{\varepsilon _0} + (1 - {\varepsilon _0})\frac{{1 + {e^{ - 2c\eta }}}}{2}]^N}\sum\limits_{k = 0}^N {\binom{N}{k}{{\epsilon}^k}}\\
= &{[{\varepsilon _0} + (1 - {\varepsilon _0})\frac{{1 + {e^{ - 2c\eta }}}}{2}]^N}{(1 +\epsilon)^N}\\
= &{[(1 - (1 - {\varepsilon _0})\frac{{1 - {e^{ - 2c\eta }}}}{2})(1 +\epsilon)]^N}\\
\le &{\{ 1 - [(1 - {\varepsilon _0})(1 - \frac{{1 - {e^{ - 2c\eta }}}}{2}) - \epsilon]\} ^N}\\
=&{\{ 1 - (2b_\eta - \epsilon)\} ^N}.
\end{split}\]
When~\eqref{N_big_enough} holds, we have
\begin{equation}\label{sum_big_k}
\begin{split}
  &\sum\limits_{k > \eta \frac{N}{{\log N}}} {\binom{N}{k}{{(\frac{{{p_{\text{ch}}}}}{{c\log N}})}^k}{{[{\varepsilon _0} + (1 - {\varepsilon _0})\frac{{1 + {{(1 - 2p)}^k}}}{2}]^N}}}\\
  <& {(1 - b_\eta)^N}.
  \end{split}
\end{equation}
Combining~\eqref{Error_Middle} and~\eqref{k_small_drv2}, we get
\[\begin{split}
P_e^{(N)}&\le {(1 - b_\eta)^N}{\rm{+}}\\
&\sum\limits_{k < \eta \frac{N}{{\log N}}} {\binom{N}{k}{{\epsilon}^k}{{[{\varepsilon _0} + (1 - {\varepsilon _0})\frac{{1 + {{(1 - 2p)}^k}}}{2}]^N}}} \\
&\le{(1 - b_\eta)^N}+\sum\limits_{k < \eta \frac{N}{{\log N}}} {\left( {\frac{e}{k}\epsilon{N^{1 - c(1 - {\varepsilon _0})(1 - c\eta )}}} \right)^k}\\
&\le{(1 - b_\eta)^N}+ \eta \frac{N}{{\log N}}\frac{e}{k}\epsilon{N^{1 - c(1 - {\varepsilon _0})(1 - c\eta )}}\\
&\le {(1 - b_\eta)^N}{\rm{ + }}\eta e \epsilon \frac{{{N^{2 - c(1 - {\varepsilon _0})(1 - c\eta )}}}}{{\log N}}.
\end{split}\]
When $2 < c(1 - {\varepsilon _0})(1 - c\eta )$, the right hand side decreases polynomially with $N$.
\section{Proof of Corollary~\ref{coding_upb}}\label{pf_coding_upb}
The proof relies on building the relation between the $\mathcal{GC}$-3 graph code and an ordinary error control code. We construct the error control code as follows:
\begin{itemize}
  \item Construct a directed Erd$\ddot{o}$s-R$\acute{e}$nyi network $\mathcal{G}=(\mathcal{V},\mathcal{E})$ with $N$ nodes and connection probability $p=\frac{c\log N}{N}$, where $c$ is a constant which will be defined later.
  \item Construct a linear code with the generated matrix $\mathbf{G}=[\mathbf{I},\mathbf{A}]$, where $\mathbf{A}_{N\times N}$ is the adjacency matrix of the directed network in the previous step, \emph{i.e.}, the entry $A_{m,n}=1$ if and only if $v_m$ is connected to $v_n$.
\end{itemize}
The number of edges in $\mathcal{E}$ is a binomial random variable distributed according to $\text{Binomial}(N^2,p)$. Using the Chernoff bound~\cite{Che_AMS_52}, we obtain
\begin{equation}\label{Bin_large_deviation}
  \Pr(|\mathcal{E}|>2pN^2)<\exp(-\frac{p^2}{2}N^2)=(\frac{1}{N})^{\frac{c^2}{2}\log N}.
\end{equation}
Then we use the code constructed above to encode $N$ binary bits and transmit the encoded bits via $2N$ parallel BECs to the receiver. Denote by $A_e^{(N)}$ the event of a block error on the receiver side. \textcolor{black}{Define $P_e^{(N)}=\Pr(A_e^{(N)})$ as the block error probability. Note that
\begin{equation}
  P_e^{(N)}=\mathbb{E}\left[P_e^{\mathcal{G}}\right],
\end{equation}
where $P_e^{\mathcal{G}}=\Pr\left(A_e^{(N)}\mid\mathcal{G}\right)$ is the block error probability conditioned on the graph instance $\mathcal{G}$. In other words, $P_e^{(N)}$ is the expected block error probability of an ensemble of codes constructed based on directed Erd$\ddot{o}$s-R$\acute{e}$nyi networks.}

Clearly, this point-to-point transmitting scheme is the same as carrying out the in-network computing scheme in Section~\ref{Algorithm}, except that the encoding step in the point-to-point case is centralized instead of being distributed. This is equivalent to the in-network computing scheme when channels between neighboring sensor nodes are without erasures and erasures happen only when communicating over the channels to the decoder (compare with the second step of the in-network computing scheme). Since erasure events constitute a strict subset of those encountered in the in-network computing scheme, the upper bound on the error probability in Theorem~\ref{Theorem1} still holds, which means that the expected block error probability $P_e^{(N)}$ goes down polynomially when the constant $c$ designed for the connection probability $p=\frac{c\log N}{N}$ satisfies the same condition in Theorem~\ref{Theorem1}. Note that
\begin{equation}\label{rm_cd_argu}
\begin{split}
  P_e^{(N)}= \Pr(A_e^{(N)})=&\Pr(|\mathcal{E}|>2pN^2)\Pr\left(A_e^{(N)}\mid|\mathcal{E}|>2pN^2\right)\\
  &+\Pr(|\mathcal{E}|<2pN^2)\Pr\left(A_e^{(N)}\mid|\mathcal{E}|<2pN^2\right).
\end{split}
\end{equation}
Thus, combining~\eqref{rm_cd_argu} with~\eqref{Bin_large_deviation} and~\eqref{error_exponent}, we conclude that the block error probability conditioned on $|\mathcal{E}|<2pN^2$, or equivalently $\Pr(A_e^{(N)}||\mathcal{E}|<2pN^2)$, decreases polynomially with $N$. This means that, by expurgating the code ensemble and eliminating the codes that have more than $2pN^2=\mathcal{O}(N\log N)$ ones in their generator matrices, we obtain a sparse code ensemble, of which the expected error probability decreases polynomially with $N$. Therefore, there exists a series of sparse codes which obtains polynomially decaying error probability with $N$.
\end{document}